\newlist{inlinelist}{enumerate*}{1}
\setlist*[inlinelist,1]{%
  label=(\roman*),
}
\newcommand{\y}{\wedge}
\DeclareMathOperator{\dom}{dom}
\newcommand{\modelo}[1]{\mathbf{#1}}
\newcommand{\axiomas}[1]{\mathit{#1}}
\newcommand{\clase}[1]{\mathsf{#1}}
\DeclareMathAlphabet{\mathbbm}{U}{bbm}{m}{n} 
\newcommand{\1}{\mathbbm{1}}
\newcommand{\PP}{\mathbbm{P}}
\renewcommand{\emptyset}{\varnothing}
\newcommand{\Pow}{\mathop{\mathcal{P}}}
\renewcommand{\P}{\Pow}
\newcommand{\lb}{\langle}
\newcommand{\rb}{\rangle}
\newcommand{\impl}{\rightarrow}
\renewcommand{\phi}{\varphi}
\renewcommand{\th}{\theta}
\newcommand{\al}{\alpha}
\newcommand{\be}{\beta}
\newcommand{\defi}{\mathrel{\mathop:}=}
\newcommand{\forces}{\Vdash}
\newcommand{\sig}{\ensuremath{\sigma}}
\newcommand{\CH}{\axiomas{CH}}
\newcommand{\ZF}{\axiomas{ZF}}
\newcommand{\restr}{\upharpoonright}
\newcommand{\union}{\mathop{\textstyle\bigcup}}
\newcommand{\sbq}{\subseteq}
\DeclareMathOperator{\val}{\mathit{val}}
\DeclareMathOperator{\chk}{\mathit{check}}
\renewcommand{\PP}{\mathbb{P}}
\newcommand{\formula}{\isatt{formula}}
\newcommand{\tyi}{\isatt{i}}
\newcommand{\tyo}{\isatt{o}}
\newcommand{\forceisa}{\mathop{\mathtt{forces}}}
  \DeclareFontFamily{U}{txsymbols}{}
  \DeclareFontFamily{U}{txAMSb}{}
  \DeclareSymbolFont{txsymbols}{OMS}{txsy}{m}{n}
  \DeclareSymbolFont{txAMSb}{U}{txsyb}{m}{n}
  \DeclareMathSymbol{\aleph}{\mathord}{txsymbols}{64}
  \DeclareMathSymbol{\beth}{\mathord}{txAMSb}{105}
  \DeclareMathSymbol{\gimel}{\mathord}{txAMSb}{106}
  \DeclareMathSymbol{\daleth}{\mathord}{txAMSb}{107}
\newtheorem{theorem}{Theorem}
\newtheorem{lemma}[theorem]{Lemma}
\newtheorem*{claim*}{Claim}
\theoremstyle{definition}
\newtheorem{definition}[theorem]{Definition}
\theoremstyle{remark}
\newtheorem*{remark*}{Remark}
\newcommand{\quantRel}[3]{#1 #2\kern -1pt[#3]}
\newif\ifarXiv
\newif\ifIEEE
\def\foottext{\gdef\@thefnmark{}\@footnotetext}
\newcommand{\keywords}[1]{\foottext{\emph{Keywords:} #1}}
\begin{document}
\title{Mechanization of Separation in Generic Extensions}
\author{Emmanuel Gunther
  \and 
  Miguel Pagano
  \and 
  Pedro S\'anchez Terraf}
\maketitle

\begin{abstract}
  We mechanize, in the proof assistant
  Isabelle, a proof of the
  axiom-scheme of Separation in 
  generic extensions of models of set theory  
  by using the fundamental theorems of forcing.
  We also formalize the satisfaction of the axioms of
  Extensionality, Foundation, Union, and Powerset. The axiom of
  Infinity is likewise treated, under additional assumptions on the ground
  model.
  In order to achieve these goals, we extended Paulson's library on
  constructibility  with 
  renaming of variables for internalized formulas, improved
  results on 
  definitions by recursion on well-founded  relations, 
  and sharpened hypotheses in his development of relativization and
  absoluteness.
\end{abstract}

\keywords{
Isabelle/ZF, forcing, names, generic extension, constructibility.
}

\section{Introduction}
Zermelo-Fraenkel Set Theory ($\ZF$) has a prominent place among formal
theories. The reason for this is that it formalizes many intuitive
properties of the notion of set. As such, it can be used as a
foundation for mathematics and thus it has been
thoroughly studied. 
Considering the current trend 
of mechanization of mathematics~\cite{avigad2018mechanization}, it
seems natural to ask for a mechanization of the most salient results
of Set Theory.

The results we are interested in originally arose in connection to
\emph{relative consistency} proofs in set theory; that is, showing
that if $\ZF$ is 
consistent, the addition of a new axiom $A$ won't make the system
inconsistent; this is as much as we can expect to obtain, since
G\"odel's Incompleteness Theorems precludes a  
formal proof of the consistency of set theory in $\ZF$, unless the
latter is indeed inconsistent. There are statements $A$ which are
\emph{undecided} by $\ZF$, in the sense that both $A$ and $\neg A$ are
consistent relative to $\ZF$; perhaps the most prominent  is the
\emph{Continuum Hypothesis}, which led to the development of powerful
techniques for independence proofs. First, G\"odel inaugurated the
theory of \emph{inner models} by introducing his model $L$ of
the \emph{Axiom of Constructibility} \cite{godel-L} and proved the
relative consistency of the Axiom of Choice and the Generalized
Continuum Hypothesis with $\ZF$. More than twenty years later, Paul
J.~Cohen~\cite{Cohen-CH-PNAS} devised the technique of \emph{forcing},
which is the only known way of \emph{extending} models of $\ZF$; this
was used to prove the relative consistency of the 
negation of the Continuum Hypothesis. 

In this work we address a substantial part of formalizing the proof
that given a model $M$ of $\ZF$, any \emph{generic extension} $M[G]$
obtained by forcing also satisfies $\ZF$. As remarked by
\citet[][p.250]{kunen2011set} \enquote{[...] in verifying that $M[G]$
  is a model for set theory, the hardest axiom to verify is
  [Separation].}  
The most important achievement of this paper is the
mechanization in the proof assistant \emph{Isabelle} of a proof of the
Axiom of Separation in 
generic extensions by using the ``fundamental'' theorems of forcing.
En route
to this, we also formalized the satisfaction by $M[G]$ of
Extensionality, Foundation, and Union. As a consequence of Separation
we were able to formalize the proof of the Powerset Axiom; finally,
the Axiom of Infinity was proved under extra assumptions. %
The theoretical support for this work has been 
the fine textbook by Kunen \cite{kunen2011set} and our development
benefited from the remarkable work done by Lawrence 
Paulson \cite{paulson_2003} on the formalization of G\"odel's
constructible universe in Isabelle. 

The
ultimate goal of our project is the formalization of the forcing
techniques needed to show the independence of the Continuum
Hypothesis. 
We think that this project constitutes an interesting test-case
for the current technology of formalization of mathematics, in
particular for the need of handling several layers of reasoning. 

The \emph{Formal Abstracts} project~\cite{hales-fabstracts} proposes
the formalization of complex pieces of mathematics by writing the
statements of 
results and the material upon which they are based (definitions,
propositions, lemmas), but omitting the proofs. In this work we
partially adhere to this vision to delineate our formalization
strategy:
Since the proofs that the  axioms hold in generic extensions
are independent of the \emph{proofs} of the fundamental theorems of
forcing, we assumed the latter for the time being. Let us remark
that those theorems depend on the definition of a function $\forceisa$
from formulas to formulas which is, by itself, quite demanding; the
formalization of it and of the fundamental theorems of forcing %
comprises barely less than a half of our full project.

It might be a little surprising the lack of formalizations of forcing
and generic extensions. As far as we know, the development of
\citet{JFR6232} in homotopy type theory for constructing generic
extensions in a sheaf-theoretic setting is the unique mechanization of
forcing. This contrast with the fruitful use of forcing techniques to
extend the Curry-Howard isomorphism to classical axioms
\cite{Miquel:2011:FPT:2058525.2059614,lmcs:1070}. Moreover, the
combination of forcing with intuitionistic type theory
\cite{Coquand:2009:FTT:1807662.1807665,coquand2010note} gives rise
both to positive results (an algorithm to obtain witnesses of the
continuity of definable functionals \cite{coquand2012computational})
and also negative (the independence of Markov's principle
\cite{lmcs:3859}). In the same strand of forcing from the point of
view of proof theory \cite{avigad_2004} are the conservative
extensions of CoC with forcing conditions
\cite{jaber:hal-01319066,Jaber:2012:ETT:2358958.2359524}.

In pursuing the proof of Separation on generic extensions we
extended Paulson's library with:
\begin{inlinelist}
\item renaming of variables for \emph{internalized} formulas, which
  with little effort can be extended 
  to substitutions;
\item an improvement on definitions by recursion on well-founded
  relations; 
\item enhancements in the hierarchy of locales; and
\item a variant of the  principle of dependent choices and a version
  of Rasiowa-Sikorski, which 
  ensures the existence of generic filters for countable and transitive
  models of $\ZF$;
\end{inlinelist} 
the last item was already communicated in the
  first report \cite{2018arXiv180705174G}.
  
We briefly describe the contents of each
section. Section~\ref{sec:isabelle} contains the bare minimum
requirements to understand the (meta)logics used in Isabelle. Next, an
overview of the model theory of set theory is presented in
Section~\ref{sec:axioms-models-set-theory}. There is an ``internal''
representation of first-order formulas as sets, implemented by
Paulson; Section~\ref{sec:renaming} discusses syntactical
transformations of the former, mainly permutation of variables. 
In Section~\ref{sec:generic-extensions} the generic extensions are
succinctly reviewed and how the treatment of well founded recursion in
Isabelle was enhanced. We take care of the ``easy axioms'' in
Section~\ref{sec:easy-axioms}; these are the ones that
do not depend on the forcing theorems. We describe the latter in
Section~\ref{sec:forcing}. We adapted the  work by Paulson to our
needs, and this is described in
Section~\ref{sec:hack-constructible}. We present the proof
of the Separation Axiom Scheme in Section~\ref{sec:proof-separation},
which follows closely its implementation, and some comments on the
proof of the Powerset Axiom. A plan for future work and
some immediate conclusions are offered in
Section~\ref{sec:conclusions-future-work}.

\section{Isabelle}
\label{sec:isabelle}
\subsection{Logics}
\label{sec:logics}
Isabelle \cite{Isabelle,DBLP:books/sp/Paulson94} provides a
meta-language called \emph{Pure} that consists of a fragment of higher
order logic, where \isatt{\isasymRightarrow} is the function-space
arrow. The meta-Boolean type is called \isatt{prop}. Meta-connectives
\isatt{\isasymLongrightarrow} and \isatt{\&\&\&} fulfill the role of
implication and conjunction, and the meta-binder \isatt{\isasymAnd}
corresponds to universal quantification.

On top of \emph{Pure}, theories/object logics can be defined, with
their own types, connectives and rules. Rules can be written  using
meta-implication: ``$P$, $Q$, and $R$ yield $S$'' can be written
\[
P \ \isatt{\isasymLongrightarrow}\ Q\ \isatt{\isasymLongrightarrow}\ R\ \isatt{\isasymLongrightarrow}\ S
\]
(as usual,  \isatt{\isasymLongrightarrow} associates to the right), and
syntactic sugar is provided to curry the previous rule as follows:
\[
\isasymlbrakk P; Q; R \isasymrbrakk \ \isatt{\isasymLongrightarrow}\ S.
\]
One further example is given by induction on the natural numbers
\isatt{nat},
\[
\isasymlbrakk P(0);\ (\textstyle\isasymAnd
x.\ P(x)\ \isasymLongrightarrow\ P(\isatt{succ}(x))) \isasymrbrakk
\ \isasymLongrightarrow\ P(n), 
\]
where we are omitting the ``typing'' assumtions on $n$ and $x$.

We work in the object theory \emph{Isabelle/ZF}. Two types are defined
in this theory: \tyo, the object-Booleans, and \tyi,
sets. 
It must be observed that predicates (functions with arguments of
type \tyi{} with values in \tyo) do not correspond to first-order formulas;
in particular, those are not recursively constructed.
This will
have concrete consequences in our strategy to approach the
development. From the beginning, we had to resort to
\emph{internalized} formulas \cite[Sect.~6]{paulson_2003}, i.e.\ elements of type $\tyi$ that
encode first-order formulas with a binary relation symbol, and the
satisfaction predicate \isatt{sats\,::\,"i\isasymRightarrow i\isasymRightarrow i\isasymRightarrow o"}  between a set
model with an environment and an internalized formula (where the
relation symbol is interpreted as membership). The set 
\isatt{formula::"\tyi"}
 of internalized
formulas is defined by recursion and hence it is possible to perform
inductive arguments using them. In this sense, the object-logic level
is further divided into \emph{internal} and \emph{external}
sublevels. 

The source code is written for the 2018 version of Isabelle and can be downloaded
from 
\ifIEEE
\begin{center}
\texttt{https://cs.famaf.unc.edu.ar/\~{}mpagano/forcing/}
\end{center}
\fi
\ifarXiv
\url{https://cs.famaf.unc.edu.ar/~mpagano/forcing/}
\fi
(with
minor modifications, it can be run in Isabelle2016-1). Most of it is
presented in the (nowadays standard) declarative flavour called
\emph{Isar} \cite{DBLP:conf/tphol/Wenzel99}, where intermediate
statements in the course of a proof are explicitly stated,
interspersed with automatic tactics handling more trivial steps. The
goal is that the resulting text, a \emph{proof document}, can be
understood without the need of replaying it and viewing the proof state
at each point.

\subsection{Locales}
\label{sec:locales}
Locales \cite{ballarin2010tutorial} provide a neat facility to
encapsulate a context (fixed objects and assumptions on them) that is
to be used in proving several theorems, as in usual mathematical
practice. Furthermore, locales can be organized in hierarchies. 

In this paper, locales have a further use. The \emph{Fundamental
  Theorems of Forcing} we use talk about a specific map $\forceisa$
from formulas to formulas. The definition of $\forceisa$ is involved
and we will not dwell on this now; but applications of those theorems
do not require to know how it is defined. Therefore, we black-box it
and pack everything in a locale called \texttt{forcing\_thms} that
assumes that there is such a 
map that satisfies the Fundamental Theorems.

\section{Axioms and models of set theory}
\label{sec:axioms-models-set-theory}

The axioms of Zermelo and Fraenkel ($\ZF$) form a
countably infinite list of first-order sentences in a language
consisting of an only binary relation symbol $\in$. These include the
axioms of Extensionality, Pairing, Union, Powerset, Foundation, Infinity, and two
axiom-schemes collectively referred as
\begin{inlinelist}[label=(\emph{\alph*})]
\item Axiom of Separation:
\emph{For every $A$, $a_1,\dots,a_n$,  and  a formula
  $\psi(x_0,x_1,\dots,x_n)$, there exists $\{a\in  A:
  \psi(a,a_1,\dots,a_n)\}$}, 
and \item Axiom of Replacement: \emph{For every $A$, $a_1,\dots,a_n$,  and
  a formula   $\psi(x,z,x_1\dots,x_n)$, if 
  $\forall x.\exists!z.\psi(x,z,x_1,\dots,x_n)$,  there exists 
  $\{b : \exists a\in A. \psi(a,b,a_1,\dots,a_n)\}$}.
\end{inlinelist}
An excellent introduction to the axioms and the motivation behind them
can be found in Shoenfield \cite{MR3727410}. 

A model of the theory $\ZF$ consists of a pair $\lb M,E\rb$ where $M$
is a set and $E$ is a binary relation on $M$ satisfying the
axioms. Forcing is a technique 
to extend very special kind of models, where $M$ is a countable
transitive set (i.e., every element of $M$ is a subset of $M$) and
$E$ is the membership relation $\in$ restricted to $M$. In this case
we simply refer to $M$ as a \emph{countable transitive model} or
\emph{ctm}. The following result shows how to obtain ctms from weaker
hypotheses. 
\begin{lemma}\label{lem:wf-model-implies-ctm}
  If there exists  a
  model  $\lb N,E\rb$  of $\ZF$ such that the relation $E$ is well
  founded, then there exists a countable transitive one.
\end{lemma}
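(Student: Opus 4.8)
The plan is to obtain the countable transitive model in two steps: first replace $\lb N,E\rb$ by a countable elementarily equivalent submodel, and then collapse the well-founded relation to genuine membership. For the first step I would invoke the downward L\"owenheim--Skolem theorem to extract from $\lb N,E\rb$ a countable elementary substructure $\lb N_0,E_0\rb$ (where $E_0 = E\restr N_0$). Since $\ZF$ holds in $\lb N,E\rb$ and $\lb N_0,E_0\rb \prec \lb N,E\rb$, elementarity guarantees that $\lb N_0,E_0\rb$ is again a model of $\ZF$. Crucially, well-foundedness of $E_0$ is inherited, because $E_0$ is a restriction of the well-founded relation $E$ to a subset, and any subrelation of a well-founded relation is well-founded. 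Hence after this step we have a \emph{countable} model of $\ZF$ whose membership-interpreting relation is still well founded.

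The second and decisive step is to pass from $\lb N_0,E_0\rb$ to a transitive set. Here I would appeal to the \emph{Mostowski collapse} (the transitive collapse lemma): given a well-founded, set-like, and extensional relation $E_0$ on a set $N_0$, there is a unique transitive set $M$ and a unique isomorphism $\pi\colon \lb N_0,E_0\rb \iso \lb M,\in\restr M\rb$ defined by the recursion $\pi(x) = \{\pi(y) : y \mathbin{E_0} x\}$. The three hypotheses of the collapse must be verified: \emph{well-foundedness} is exactly what we arranged above; \emph{set-likeness} (each $E_0$-predecessor class is a set) is automatic since $N_0$ is a set and $E_0 \subseteq N_0 \times N_0$; and \emph{extensionality} of $E_0$ follows from the fact that $\lb N_0,E_0\rb$ satisfies the Axiom of Extensionality of $\ZF$. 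Because $\pi$ is an isomorphism of structures in the language $\{\in\}$, the image $\lb M,\in\restr M\rb$ satisfies exactly the same first-order sentences, so $M$ is a model of $\ZF$; and $M$ is countable since it is the surjective image of the countable set $N_0$. As $M$ is transitive and its relation is literally $\in$, it is the desired countable transitive model.

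The step I expect to be the main obstacle is securing the hypotheses of the Mostowski collapse in a form usable inside the formalized development, rather than any conceptual difficulty. In particular, one must be careful that extensionality of the \emph{relation} $E_0$ is what the collapse needs, and that this is correctly derived from the satisfaction of the \emph{internalized} Extensionality sentence by $\lb N_0,E_0\rb$; bridging the internal satisfaction predicate and the external statement ``$E_0$ is extensional'' is precisely the kind of relativization bookkeeping that this paper emphasizes. A secondary subtlety is that the collapse $\pi$ is defined by well-founded recursion, so the improvements to well-founded recursion mentioned in the introduction are exactly what make this construction available; one should check that the recursion is well defined and that $\pi$ is injective (which again uses extensionality). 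Once these pieces are in place, the conclusion that $M$ is countable, transitive, and a model of $\ZF$ is immediate from the preservation of first-order truth under isomorphism together with the cardinality bound inherited from $N_0$.
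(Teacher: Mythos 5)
Your proposal is correct and follows essentially the same route as the paper's (sketched) proof: downward L\"owenheim--Skolem to obtain a countable elementary submodel whose relation remains well founded, followed by the Mostowski collapse onto a transitive $\lb M,\in\rb$. The details you add---inheriting well-foundedness by restriction, deriving extensionality of the relation from satisfaction of the Extensionality axiom, set-likeness, and preservation of $\ZF$ under isomorphism---are precisely what the paper delegates to its citation of Kunen, so there is no substantive difference.
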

\begin{proof}
  (Sketch) The L\"owenheim-Skolem 
  Theorem ensures that there is an countable elementary submodel 
  $\lb N',E\restr N'\rb\preccurlyeq  \lb N,E\rb$ which must also be
  well founded; then the 
  Mostowski collapsing function \cite[Def.~I.9.31]{kunen2011set} sends $\lb
  N',E\restr N'\rb$ 
  isomorphically to some $\lb M,\in\rb$ with  $M$ transitive.
\end{proof}

In this stage of our implementation, we chose a presentation of the
$\ZF$ axioms that would be most
compatible with the development by Paulson. For
instance, the predicate
\isatt{upair{\isacharunderscore}ax{\isacharcolon}{\isacharcolon}{\isachardoublequoteopen}{\isacharparenleft}i{\isacharequal}{\isachargreater}o{\isacharparenright}{\isacharequal}{\isachargreater}o{\isachardoublequoteclose}}
takes a ``class'' (unary predicate) $C$ as an argument and states that
$C$ satisfies the Pairing Axiom.
\begin{isabelle}
upair{\isacharunderscore}ax{\isacharparenleft}C{\isacharparenright}{\isacharequal}{\isacharequal}{\isasymforall}x{\isacharbrackleft}C{\isacharbrackright}{\isachardot}{\isasymforall}y{\isacharbrackleft}C{\isacharbrackright}{\isachardot}{\isasymexists}z{\isacharbrackleft}C{\isacharbrackright}{\isachardot}\ upair{\isacharparenleft}C{\isacharcomma}x{\isacharcomma}y{\isacharcomma}z{\isacharparenright}
\end{isabelle}
Here, $\forall x[C]. \phi$ stands for
$\forall x. C(x) \longrightarrow \phi$, \emph{relative}
quantification. All of the development of relativization by Paulson is
written for a class model, so we set up a locale fixing a set $M$
and using the class $\#\#M\defi \lambda x. \ x\in M$ as the argument. 
\begin{isabelle}
\isacommand{locale}\isamarkupfalse%
\ M{\isacharunderscore}ZF\ {\isacharequal}\ \isanewline
\ \isakeyword{fixes}\ M\ \isanewline
\ \isakeyword{assumes}\ \isanewline
\ \ \ \ \ \ upair{\isacharunderscore}ax{\isacharcolon}\ \ \ \ {\isachardoublequoteopen}upair{\isacharunderscore}ax{\isacharparenleft}{\isacharhash}{\isacharhash}M{\isacharparenright}{\isachardoublequoteclose}\isanewline
\ \ \isakeyword{and}\ \dots \isanewline
\ \ \isakeyword{and}\ separation{\isacharunderscore}ax{\isacharcolon}\isanewline
\ \ {\isachardoublequoteopen}{\isasymlbrakk}\ {\isasymphi}\ {\isasymin}\ formula\ {\isacharsemicolon}\ arity{\isacharparenleft}{\isasymphi}{\isacharparenright}{\isacharequal}{\isadigit{1}}\ {\isasymor}\ arity{\isacharparenleft}{\isasymphi}{\isacharparenright}{\isacharequal}{\isadigit{2}}\ {\isasymrbrakk}\isanewline\ \ \ \ {\isasymLongrightarrow}\isanewline
\ \ {\isacharparenleft}{\isasymforall}a{\isasymin}M{\isachardot}\ separation{\isacharparenleft}{\isacharhash}{\isacharhash}M{\isacharcomma}{\isasymlambda}x{\isachardot}\ sats{\isacharparenleft}M{\isacharcomma}{\isasymphi}{\isacharcomma}{\isacharbrackleft}x{\isacharcomma}a{\isacharbrackright}{\isacharparenright}{\isacharparenright}{\isacharparenright}{\isachardoublequoteclose}\isanewline
\ \ \isakeyword{and}\ replacement{\isacharunderscore}ax{\isacharcolon}\isanewline 
\ \ {\isachardoublequoteopen}{\isasymlbrakk}\ {\isasymphi}\ {\isasymin}\ formula{\isacharsemicolon}\ arity{\isacharparenleft}{\isasymphi}{\isacharparenright}{\isacharequal}{\isadigit{2}}\ {\isasymor}\ arity{\isacharparenleft}{\isasymphi}{\isacharparenright}{\isacharequal}{\isadigit{3}}\ {\isasymrbrakk}\isanewline
\ \ \ \ {\isasymLongrightarrow}\isanewline
\ \ {\isacharparenleft}{\isasymforall}a{\isasymin}M{\isachardot}\ strong{\isacharunderscore}replacement{\isacharparenleft}{\isacharhash}{\isacharhash}M{\isacharcomma}\isanewline
\ \ \ \ {\isasymlambda}x\ y{\isachardot}\ sats{\isacharparenleft}M{\isacharcomma}{\isasymphi}{\isacharcomma}{\isacharbrackleft}x{\isacharcomma}y{\isacharcomma}a{\isacharbrackright}{\isacharparenright}{\isacharparenright}{\isacharparenright}{\isachardoublequoteclose}
\end{isabelle}
The rest of the axioms are also
included. We single out Separation and Replacement: These are written
for formulas with at most one extra parameter (meaning $n\leq 1$ in the
above $\psi$). Thanks to Pairing, these 
versions are equivalent to the usual formulations. We are only able to
prove that the generic extension satisfies Separation for any particular
number of parameters, but not in general. This is a consequence that
induction on terms of type \tyo{} is not available.

It is also possible define a predicate that states that a set
satisfies a (possibly infinite) set of formulas, and then to state
that ``$M$ satisfies $\ZF$'' in a standard way. With the
aforementioned restriction on parameters, it can be shown that this
statement is equivalent to the set of assumptions of the locale
\isatt{M\_ZF}.
\section{Renaming}
\label{sec:renaming}
\newcommand{\renaming}[2]{(#1)[#2]}
\newcommand{\inFm}[2]{#1 \in #2}
\newcommand{\eqFm}[2]{#1 = #2}
\newcommand{\negFm}[1]{\neg #1}
\newcommand{\andFm}[2]{#1 \wedge #2}
\newcommand{\forallFm}[1]{\forall #1}

\newcommand{\inIFm}[2]{\mathsf{Member}(#1,#2)}
\newcommand{\eqIFm}[2]{\mathsf{Equal}(#1,#2)}
\newcommand{\nandIFm}[2]{\mathsf{Nand}(#1,#2)}
\newcommand{\forallIFm}[1]{\mathsf{Forall(#1)}}

In the course of our work we need to reason about renaming of formulas
and its effect on their satisfiability. Internalized formulas are
implemented using de Bruijn indices for variables and the arity of a
formula $\phi$ gives the least natural number containing all the free
variables in $\phi$. Following \citet{fiore-abssyn}, one can
understand the arity of a formula as the context of the free
variables; notice that the arity of $\forallFm{\phi}$ is the
predecessor of the arity of $\phi$. Renamings are, consequently,
mappings between finite sets; since we can think of $\mathsf{succ}(n)$
as the coproduct $1+n = \{0\} \cup \{1,\dots,n\}$, then given a
renaming $f \colon n \to m$, the 
unique morphism $\mathsf{id}_1+f \colon 1+n \to 1+m$ is used to rename
free variables in a quantified formula. 

\begin{definition}[Renaming]
  Let $\phi$ be a formula of arity $n$ and let $f \colon n \to m$, the
  renaming of $\phi$ by $f$, denoted $\renaming{\phi}{f}$, is defined
  by recursion on $\phi$:
  \begin{gather*}
    \renaming{\inFm{i}{j}}{f} = \inFm{f\,i}{f\,j}\\
    \renaming{\eqFm{i}{j}}{f} = \eqFm{f\,i}{f\,j}\\
    \renaming{\negFm{\phi}}{f} = \negFm{\renaming{\phi}{f}}\\
    \renaming{\andFm{\phi}{\psi}}{f} = \andFm{\renaming{\phi}{f}}{\renaming{\psi}{f}}\\
    \renaming{\forallFm{\phi}}{f} = \forallFm{\renaming{\phi}{\mathsf{id}_1+f}}
  \end{gather*}
\end{definition}

As usual, if $M$ is a set, $a_0,\dots,a_{n-1}$ are elements of $M$, and
$\phi$ is a formula of arity $n$, we write
\[
M,[a_0,\dots,a_{n-1}] \models \phi
\]
to denote that $\phi$ is satisfied by $M$ when $i$ is interpreted
as $a_i$ ($i=0,\dots,n-1$). We call the list $[a_0,\dots,a_{n-1}]$ the
\emph{environment}.

The action of renaming on environments re-indexes the variables. An
easy proof connects satisfaction with renamings.
\begin{lemma}
  \label{lem:renaming}
  Let $\phi$ be a formula of arity $n$, $f \colon n \to m$ be a
  renaming, and let $\rho=[a_1,\ldots,a_n]$ and
  $\rho'=[b_1,\ldots,b_m]$ be environments of length $n$ and $m$,
  respectively. If for all $i \in n$, $a_i = b_{j}$ where $j=f\,i$,
  then $M,\rho\models \phi$ is equivalent to
  $M,\rho' \models \renaming{\phi}{f}$.
\end{lemma}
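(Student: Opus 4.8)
The plan is to argue by structural induction on $\phi$, taking care to generalize the statement over the codomain $m$, the renaming $f$, and both environments before starting the induction; that is, the property proved by induction reads ``for every $f$, $m$, $\rho$ and $\rho'$ with $\rho(i)=\rho'(f\,i)$ for all $i\in n$, the two satisfactions agree''. This generalization is essential, because the quantifier step will need to invoke the inductive hypothesis for a longer environment and a modified renaming, so the environments and $f$ cannot be held fixed.

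For the atomic formulas $\inFm{i}{j}$ and $\eqFm{i}{j}$ the indices $i,j$ lie below the arity $n$, so the lookups are defined; unfolding satisfaction, $M,\rho\models\inFm{i}{j}$ holds iff $\rho(i)\in\rho(j)$, while $M,\rho'\models\inFm{f\,i}{f\,j}$ holds iff $\rho'(f\,i)\in\rho'(f\,j)$, and these coincide by the hypotheses $\rho(i)=\rho'(f\,i)$ and $\rho(j)=\rho'(f\,j)$; the equality case is identical.

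For the Boolean connectives, satisfaction commutes with $\neg$ and $\wedge$ and renaming distributes over them by definition, so $\renaming{\negFm{\phi}}{f}=\negFm{\renaming{\phi}{f}}$ and $\renaming{\andFm{\phi}{\psi}}{f}=\andFm{\renaming{\phi}{f}}{\renaming{\psi}{f}}$, and the claim follows at once by applying the inductive hypothesis to the immediate subformulas. The only point to note is that a subformula of $\negFm{\phi}$ or of $\andFm{\phi}{\psi}$ has arity at most $n$, so the compatibility hypothesis we are given, with the same $f$, still applies to it.

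The crux is the quantifier case $\forallFm{\phi}$, where $\renaming{\forallFm{\phi}}{f}=\forallFm{\renaming{\phi}{\mathsf{id}_1+f}}$ and the body $\phi$ has arity $n+1$. By the semantics of $\forall$, $M,\rho\models\forallFm{\phi}$ iff $M,[x]\concat\rho\models\phi$ for all $x\in M$, and symmetrically on the primed side with $\mathsf{id}_1+f$. To finish, I would verify that the prepended environments $[x]\concat\rho$ and $[x]\concat\rho'$ are compatible with respect to $\mathsf{id}_1+f$: at index $0$ both return $x$ and $(\mathsf{id}_1+f)\,0=0$, while at index $k+1$ (with $k\in n$) the left environment returns $\rho(k)$, and since $(\mathsf{id}_1+f)(k+1)=f(k)+1$ the right one returns $\rho'(f(k))=\rho(k)$. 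Hence the inductive hypothesis applies to $\phi$ with these longer environments, and universally quantifying over $x\in M$ gives the desired equivalence. The main obstacle is precisely this bookkeeping: one must recognize that prepending the witness shifts every de Bruijn index up by one and that $\mathsf{id}_1+f$ is exactly the renaming that absorbs this shift, so that the computation ``$0\mapsto 0$ and $k+1\mapsto f(k)+1$'' aligns the two longer environments correctly.
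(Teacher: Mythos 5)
Your proof is correct and follows essentially the same route as the paper's formalization: the Isabelle lemma \isatt{sats\_iff\_sats\_ren} is stated with the meta-quantifier ranging over $n$, $m$, $\rho$, $\rho'$, and $f$ precisely so that the structural induction on $\phi$ goes through, and its proof handles the quantifier case exactly by your observation that $\mathsf{id}_1+f$ absorbs the de Bruijn shift caused by prepending the witness to both environments. The paper treats this as the ``easy proof'' connecting satisfaction with renamings, and your write-up supplies the same induction with the same key bookkeeping step.
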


An important resource in Isabelle/ZF is the facility for defining
inductive sets \cite{paulson2000fixedpoint,paulson1995set} together
with a principle for defining functions by structural recursion.
Internalized formulas are a prime example of this, so we define
a function \isa{ren} that associates to each formula an internalized
function that can be later applied to suitable arguments. Notice the
Paulson used \isa{Nand} because it is more economical.
\begin{isabelle}
\isamarkuptrue%
\isacommand{consts}\isamarkupfalse%
\ ren\ {\isacharcolon}{\isacharcolon}\ {\isachardoublequoteopen}i{\isacharequal}{\isachargreater}i{\isachardoublequoteclose}\isanewline
\isacommand{primrec}\isamarkupfalse%
\isanewline
\ {\isachardoublequoteopen}ren{\isacharparenleft}Member{\isacharparenleft}x{\isacharcomma}y{\isacharparenright}{\isacharparenright}\ {\isacharequal}\isanewline
\ \ {\isacharparenleft}{\isasymlambda}\ n\ {\isasymin}\ nat\ {\isachardot}\ {\isasymlambda}\ m\ {\isasymin}\ nat{\isachardot}\ {\isasymlambda}f\ {\isasymin}\ n\ {\isasymrightarrow}\ m{\isachardot}\ Member\ {\isacharparenleft}f{\isacharbackquote}x{\isacharcomma}\ f{\isacharbackquote}y{\isacharparenright}{\isacharparenright}{\isachardoublequoteclose}\isanewline
\ \isanewline
\ {\isachardoublequoteopen}ren{\isacharparenleft}Equal{\isacharparenleft}x{\isacharcomma}y{\isacharparenright}{\isacharparenright}\ {\isacharequal}\isanewline
\ \ {\isacharparenleft}{\isasymlambda}\ n\ {\isasymin}\ nat\ {\isachardot}\ {\isasymlambda}\ m\ {\isasymin}\ nat{\isachardot}\ {\isasymlambda}f\ {\isasymin}\ n\ {\isasymrightarrow}\ m{\isachardot}\ Equal\ {\isacharparenleft}f{\isacharbackquote}x{\isacharcomma}\ f{\isacharbackquote}y{\isacharparenright}{\isacharparenright}{\isachardoublequoteclose}\isanewline
\ \isanewline
\ {\isachardoublequoteopen}ren{\isacharparenleft}Nand{\isacharparenleft}p{\isacharcomma}q{\isacharparenright}{\isacharparenright}\ {\isacharequal}\isanewline
\ \ {\isacharparenleft}{\isasymlambda}\ n\ {\isasymin}\ nat\ {\isachardot}\ {\isasymlambda}\ m\ {\isasymin}\ nat{\isachardot}\ {\isasymlambda}f\ {\isasymin}\ n\ {\isasymrightarrow}\ m{\isachardot}\ \isanewline
\ \ \ Nand\ {\isacharparenleft}ren{\isacharparenleft}p{\isacharparenright}{\isacharbackquote}n{\isacharbackquote}m{\isacharbackquote}f{\isacharcomma}\ ren{\isacharparenleft}q{\isacharparenright}{\isacharbackquote}n{\isacharbackquote}m{\isacharbackquote}f{\isacharparenright}{\isacharparenright}{\isachardoublequoteclose}\isanewline
\ \isanewline
\ {\isachardoublequoteopen}ren{\isacharparenleft}Forall{\isacharparenleft}p{\isacharparenright}{\isacharparenright}\ {\isacharequal}\isanewline
\ \  {\isacharparenleft}{\isasymlambda}\ n\ {\isasymin}\ nat\ {\isachardot}\ {\isasymlambda}\ m\ {\isasymin}\ nat{\isachardot}\ {\isasymlambda}f\ {\isasymin}\ n\ {\isasymrightarrow}\ m{\isachardot}\ \isanewline
\ \ \ Forall\ {\isacharparenleft}ren{\isacharparenleft}p{\isacharparenright}{\isacharbackquote}succ{\isacharparenleft}n{\isacharparenright}{\isacharbackquote}succ{\isacharparenleft}m{\isacharparenright}{\isacharbackquote}sum{\isacharunderscore}id{\isacharparenleft}n{\isacharcomma}f{\isacharparenright}{\isacharparenright}{\isacharparenright}{\isachardoublequoteclose}
\end{isabelle}

In the last equation, \isa{sum{\isacharunderscore}id} corresponds to
the coproduct morphism $\mathsf{id}_{1}+f \colon 1 + n \to 1 +
n$. Since the schema for recursively defined functions does not allow
parameters, we are forced to return a function of three arguments
(\isa{n,m,f}). This also exposes some inconveniences of working in the
untyped realm of set theory; for example to use \isa{ren} we will need
to prove that the renaming is a function. Besides some auxiliary
results (the application of renaming to suitable arguments yields a
formula), the main result corresponding to Lemma~\ref{lem:renaming}
is:
\begin{isabelle}
\isacommand{lemma}\isamarkupfalse%
\ sats{\isacharunderscore}iff{\isacharunderscore}sats{\isacharunderscore}ren\ {\isacharcolon}\ \isanewline
\ \ \isakeyword{fixes}\ {\isasymphi}\isanewline
\ \ \isakeyword{assumes}\ {\isachardoublequoteopen}{\isasymphi}\ {\isasymin}\ formula{\isachardoublequoteclose}\isanewline
\ \ \isakeyword{shows}\ \ {\isachardoublequoteopen}{\isasymAnd}\ n\ m\ {\isasymrho}\ {\isasymrho}{\isacharprime}\ f\ {\isachardot}\ \isanewline
\ \ {\isasymlbrakk}n{\isasymin}nat\ {\isacharsemicolon}\ m{\isasymin}nat\ {\isacharsemicolon}\ f\ {\isasymin}\ n{\isasymrightarrow}m\ {\isacharsemicolon}\ arity{\isacharparenleft}{\isasymphi}{\isacharparenright}\ {\isasymle}\ n\ {\isacharsemicolon}\isanewline
\ \ \ \ \ {\isasymrho}\ {\isasymin}\ list{\isacharparenleft}M{\isacharparenright}\ {\isacharsemicolon}\ {\isasymrho}{\isacharprime}\ {\isasymin}\ list{\isacharparenleft}M{\isacharparenright}\ {\isacharsemicolon}\ \isanewline
\ \ \ {\isasymAnd}\ i\ {\isachardot}\ i{\isacharless}n\ {\isasymLongrightarrow}\ nth{\isacharparenleft}i{\isacharcomma}{\isasymrho}{\isacharparenright}\ {\isacharequal}\ nth{\isacharparenleft}f{\isacharbackquote}i{\isacharcomma}{\isasymrho}{\isacharprime}{\isacharparenright}\ {\isasymrbrakk}\ {\isasymLongrightarrow}\isanewline
\ \ sats{\isacharparenleft}M{\isacharcomma}{\isasymphi}{\isacharcomma}{\isasymrho}{\isacharparenright}\ {\isasymlongleftrightarrow}\ sats{\isacharparenleft}M{\isacharcomma}ren{\isacharparenleft}{\isasymphi}{\isacharparenright}{\isacharbackquote}n{\isacharbackquote}m{\isacharbackquote}f{\isacharcomma}{\isasymrho}{\isacharprime}{\isacharparenright}{\isachardoublequoteclose}\end{isabelle}

All our uses of this lemma involve concrete renamings on small
numbers, but we also tested it with more abstract ones for arbitrary
numbers. All the renamings of the first kind follow the same pattern
and, more importantly, share equal proofs. We would like to develop
some \texttt{ML} tools in order to automatize this.
\section{Generic extensions}
\label{sec:generic-extensions}
We will swiftly review some definitions in order to reach the concept
of \emph{generic extension}. As first preliminary definitions, a \emph{forcing
notion} $\lb\PP,\leq,\1\rb$ is simply a preorder with top, and a \emph{filter}
$G\sbq\PP$ is an increasing subset which is downwards
compatible. Given a ctm $M$ of $\ZF$, a forcing
notion in $M$, and a filter $G$, a new set $M[G]$ is defined. Each
element $a\in M[G]$ is 
determined by its \emph{name} $\dot a$ of $M$. Actually, the structure of
each $\dot a$ is used to construct $a$. They are related by a
map $\val$ that takes $G$ as a parameter:
\[
\val(G,\dot a) = a.
\] 
Then the extension is defined by the image of the map $\val(G,\cdot)$:
\[
M[G] \defi \{\val(G,\tau): \tau\in M\}.
\]
Metatheoretically, it is straightforward to see that $M[G]$ is a
transitive set that satisfies some axioms of $\ZF$ (see
Section~\ref{sec:easy-axioms}) and includes $M\cup\{G\}$. Nevertheless
there is no a priori reason for $M[G]$ to satisfy either Separation, Powerset
or Replacement. The original insight by Cohen was to define the notion
of \emph{genericity} for a filter $G\sbq\PP$ and to prove that
whenever $G$ is generic, $M[G]$ will satisfy $\ZF$. Remember that a
filter is generic if it intersects all the dense sets in $M$; in
\cite{2018arXiv180705174G} we formalized the Rasiowa-Sikorski lemma which
proves the existence of generic filters for ctms.

The Separation Axiom  is the first that requires the notion of
genericity and the use of the forcing machinery, which we review in
the Section~\ref{sec:forcing}.

\subsection{Recursion and values of names}

The map $\val$ used in the definition of the generic extension is
characterized by the recursive equation
\begin{equation}
  \label{eq:val}
  \val(G,\tau) = \{val(G,\sigma) :\exists p \in\PP .%
  \lb\sigma,p\rb \in \tau \wedge p \in G \}
\end{equation}

As is well-known, the principle of  recursion on
well-founded relations \cite[p.~48]{kunen2011set} allows us to define
a recursive function $F \colon A\to A$ by choosing a well-founded
relation $R \subseteq A\times A$ and a functional
$H\colon A\times (A \to A) \to A$ satisfying
$F(a)=H(a,F\!\upharpoonright\!(R^{-1}(a)))$. \citet{paulson1995set}
made this principle available in Isabelle/ZF via the the operator
\isa{wfrec}. The formalization of the corresponding functional
$\mathit{Hv}$ for $\val$ is straightforward:
\begin{isabelle}
\isacommand{definition}\isamarkupfalse%
\isanewline
\ \ Hv\ {\isacharcolon}{\isacharcolon}\ {\isachardoublequoteopen}i{\isasymRightarrow}i{\isasymRightarrow}i{\isasymRightarrow}i{\isachardoublequoteclose}\ \isakeyword{where}\isanewline
\ \ {\isachardoublequoteopen}Hv{\isacharparenleft}G{\isacharcomma}y{\isacharcomma}f{\isacharparenright}\ {\isacharequal}{\isacharequal}\ {\isacharbraceleft}f{\isacharbackquote}x\ {\isachardot}{\isachardot}\ x{\isasymin}domain{\isacharparenleft}y{\isacharparenright}{\isacharcomma}\ {\isasymexists}p{\isasymin}P{\isachardot}\ {\isacharless}x{\isacharcomma}p{\isachargreater}\ {\isasymin}\ y\ {\isasymand}\ p\ {\isasymin}\ G\ {\isacharbraceright}{\isachardoublequoteclose}
\end{isabelle}
In the references \cite{kunen2011set,weaver2014forcing} $\val$ is
applied only to \emph{names}, that are certain elements of $M$
characterized by a recursively defined predicate. The well-founded
relation used to justify Equation~\eqref{eq:val} is
\[ x \mathrel{\mathit{ed}} y \iff \exists p . \lb x,p\rb\in y. \] In
order to use \isa{wfrec} the relation should be expressed as a set, so
in \cite{2018arXiv180705174G} we originally took the restriction of
$\mathit{ed}$ to the whole universe 
$M$; i.e. $\mathit{ed}\cap M\times M$.  Although this decision was
adequate for that work, we now required more flexibility (for
instance, in order to apply $\val$ to arguments that we can't assume
that are in $M$, see Eq.~(\ref{eq:val-of-m}) below).

The remedy is to restrict $\mathit{ed}$ to the
transitive closure of the actual parameter:
\begin{isabelle}
\isacommand{definition}\isamarkupfalse%
\isanewline
\ val\ {\isacharcolon}{\isacharcolon}\ {\isachardoublequoteopen}i{\isasymRightarrow}i{\isasymRightarrow}i{\isachardoublequoteclose}\ \isakeyword{where}\isanewline
\ {\isachardoublequoteopen}val{\isacharparenleft}G{\isacharcomma}{\isasymtau}{\isacharparenright}{\isacharequal}{\isacharequal}\ wfrec{\isacharparenleft}edrel{\isacharparenleft}eclose{\isacharparenleft}{\isacharbraceleft}{\isasymtau}{\isacharbraceright}{\isacharparenright}{\isacharparenright}{\isacharcomma}{\isasymtau}{\isacharcomma}Hv{\isacharparenleft}G{\isacharparenright}{\isacharparenright}{\isachardoublequoteclose}
\end{isabelle}

In order to show that this definition satisfies~(\ref{eq:val}) we had
to supplement the existing recursion tools with a key, albeit
intuitive, result stating that when computing the value of a recursive 
function on some argument $a$, one can restrict the relation to some
ambient set if it includes $a$ and all of its predecessors.
\begin{isabelle}
\isacommand{lemma}\isamarkupfalse%
\ wfrec{\isacharunderscore}restr\ {\isacharcolon}\isanewline
\ \ \isakeyword{assumes}\ {\isachardoublequoteopen}relation{\isacharparenleft}r{\isacharparenright}{\isachardoublequoteclose}\ {\isachardoublequoteopen}wf{\isacharparenleft}r{\isacharparenright}{\isachardoublequoteclose}\ \isanewline
\ \ \isakeyword{shows}\ \ {\isachardoublequoteopen}a{\isasymin}A\ {\isasymLongrightarrow}\ {\isacharparenleft}r{\isacharcircum}{\isacharplus}{\isacharparenright}{\isacharminus}{\isacharbackquote}{\isacharbackquote}{\isacharbraceleft}a{\isacharbraceright}\ {\isasymsubseteq}\ A\ {\isasymLongrightarrow}\ \isanewline
\ \ \ \ \ \ \ \ \ \ wfrec{\isacharparenleft}r{\isacharcomma}a{\isacharcomma}H{\isacharparenright}\ {\isacharequal}\ wfrec{\isacharparenleft}r{\isasyminter}A{\isasymtimes}A{\isacharcomma}a{\isacharcomma}H{\isacharparenright}{\isachardoublequoteclose}
\end{isabelle}
As a consequence, we are able to formalize Equation~(\ref{eq:val}) as follows:
\begin{isabelle}
  \isacommand{lemma}\isamarkupfalse%
  \ def{\isacharunderscore}val{\isacharcolon}\isanewline
  \ {\isachardoublequoteopen}val{\isacharparenleft}G{\isacharcomma}x{\isacharparenright}\ {\isacharequal}\ {\isacharbraceleft}val{\isacharparenleft}G{\isacharcomma}t{\isacharparenright}\ {\isachardot}{\isachardot}\ t{\isasymin}domain{\isacharparenleft}x{\isacharparenright}\ {\isacharcomma}\isanewline
\ \ \ \ \ \ \ \ \ \ \ \ \ \ \ \ \ \ \ {\isasymexists}p{\isasymin}P{\isachardot}\ {\isacharless}t{\isacharcomma}p{\isachargreater}{\isasymin}x\ {\isasymand}\ p{\isasymin}G\ {\isacharbraceright}{\isachardoublequoteclose}
\end{isabelle}
and the monotonicity of $\val$ follows automatically after a
substitution.
\begin{isabelle}
\isacommand{lemma}\isamarkupfalse%
\ val{\isacharunderscore}mono{\isacharcolon}\ {\isachardoublequoteopen}x{\isasymsubseteq}y\ {\isasymLongrightarrow}\ val{\isacharparenleft}G{\isacharcomma}x{\isacharparenright}\ {\isasymsubseteq}\ val{\isacharparenleft}G{\isacharcomma}y{\isacharparenright}{\isachardoublequoteclose}\isanewline
\ \ \isacommand{by}\isamarkupfalse%
\ {\isacharparenleft}subst\ {\isacharparenleft}{\isadigit{1}}\ {\isadigit{2}}{\isacharparenright}\ def{\isacharunderscore}val{\isacharcomma}\ force{\isacharparenright}%
\end{isabelle}
More interestingly we can give a neat equation for values of
names defined by Separation, say $B = \{x\in A\times \PP.\ Q(x)\}$,
then
\begin{equation}
\val(G,B) = \{\val(G,t) : t\in A , \exists p\in \PP \cap G.\ Q(\lb t,p\rb) \} \label{eq:val-name-sep}
\end{equation}

We close our discussion of names and their values by making explicit
the names for elements in $M$; once more, we refer to
\cite{2018arXiv180705174G} for our formalization. The definition of
$\chk(x)$ is a straightforward $\in$-recursion:
\begin{equation*}
  \label{eq:def-check}
  \chk(x)\defi\{\lb\chk(y),\1\rb : y\in x\}
\end{equation*}
An easy $\in$-induction shows $\val(G,\chk(x))=x$.
But to conclude $M\subseteq M[G]$ one also needs to have
$\chk(x) \in M$; this result requires the internalization of
recursively defined functions. This is also needed to prove
$G \in M[G]$; let us define
$\dot G= \{\lb \chk(p),p\rb : p \in \PP \}$, it is easy to prove
$\val(G,\dot G) = G$. Proving $\dot G\in M$ involves knowing
$\chk(x) \in M$ and using one instance of Replacement.

Paulson proved absoluteness results for definitions by recursion and
one of our next goals is to instantiate at $\#\#M$ the appropriate
locale 
\isa{M{\isacharunderscore}eclose} which is the last layer of a pile of
locales. It will take us some time to prove that any ctm of $\ZF$ 
satisfies the
assumptions involved in those locales; as we mentioned, Paulson's work
is mostly done \emph{externally}, i.e. the assumptions are instances
of Separation and Replacement where the predicates and functions are
Isabelle functions of type \isa{i{\isasymRightarrow}i} and
\isa{i{\isasymRightarrow}o}, respectively. In contrast, we assume that
$M$ is a model of $\ZF$, therefore to deduce that $M$ satisfies a
Separation instance, we have to define an internalized formula whose
satisfaction is equivalent to the external predicate (cf. the
interface described in Section~\ref{sec:axioms-models-set-theory} and
also the concrete example given in the proof of Union below).

In the meantime, we declare a locale
\isa{M{\isacharunderscore}extra{\isacharunderscore}assms} assembling
both assumptions ($M$ being closed under $\chk$ and the instance of
Replacement); in this paper we explicitly mention where we use them.

\section{Hacking of \isatt{ZF-Constructible}}
\label{sec:hack-constructible}
In \cite{paulson_2003}, Paulson presented his formalization of the
relative consistency of the Axiom of Choice. This development is
included inside the Isabelle distribution under the session 
\isatt{ZF-Constructible}. The main technical devices, invented by
G\"odel for this purpose, are \emph{relativization} and
\emph{absoluteness}. In a nutshell, to relativize a formula $\phi$ to
a class $C$, it is enough to restrict its quantifiers to $C$. The
example of \isatt{upair\_ax} in
Section~\ref{sec:axioms-models-set-theory}, the relativized version of
the Pairing Axiom, is extracted from \texttt{Relative}, one of the
core theories of \isatt{ZF-Constructible}. On the other hand, $\phi$
is \emph{absolute} for $C$ if it is equivalent to its relativization,
meaning that the statement made by $\phi$ coincides with what $C$
``believes'' $\phi$ is saying. Paulson shows that under certain
hypotheses  on a class $M$ (condensed in the locale \isatt{M\_trivial}), a plethora of
absoluteness and closure results can be proved about $M$.

The development of forcing, and the study of ctms in general, takes
absoluteness as a starting point. We were not able to work with
\isatt{ZF-Constructible} right out-of-the-box. The main reason is that
we can't expect to state the ``class version'' of Replacement for a
\emph{set} $M$ by
using first-order formulas, since predicates
\isatt{P::"i\isasymRightarrow o"} can't
be proved to be only the definable ones. Therefore, we had to make
some modifications in several
locales to make the results available as tools for the present and
future developments.

The most notable changes, located in the theories \texttt{Relative}
and \isatt{WF\_absolute}, are
the following:
\begin{enumerate}
\item\label{item:1} The locale \isatt{M\_trivial}
  does not assume that the underlying class $M$ satisfies the relative
  Axiom of replacement.  As a consequence, the lemma
  \isatt{strong\_replacementI} is no longer valid and was commented
  out.
\item\label{item:2}  Originally the Powerset Axiom was assumed by the
  locale \isatt{M\_trivial},   we moved this requirement to \isatt{M\_basic}. 
\item\label{item:3} We replaced the need that the set of natural
  numbers is in $M$ by the   milder hypothesis that $M(0)$. Actually,
  most results should follow 
  by only assuming that $M$ is non-empty.
\item We moved the requirement $M(\mathtt{nat})$ to the locale
  \isatt{M\_trancl}, where it is needed for the first time. Some results,
  for instance \isatt{rtran\_closure\_mem\_iff} and 
  \isatt{iterates\_imp\_wfrec\_replacement} had to be moved inside that
  locale. 
\end{enumerate}
Because of these changes, some theory files from the
\isatt{ZF-Constructible} session have been included among ours.

The proof, for instance, that the constructible universe $L$ satisfies
the modified locale \isatt{M\_trivial} holds with minor
modifications. Nevertheless, in order to have a neater presentation,
we have stripped off several sections concerning $L$ from the theories
\isatt{L\_axioms} and \isatt{Internalize}, and we merged them to form
the new file  \isatt{Internalizations}. 

\section{Extensionality, Foundation, Union, Infinity}
\label{sec:easy-axioms}

In our first presentation of this project \cite{2018arXiv180705174G},
we proved that $M[G]$ satisfies Pairing; now we have redone this proof
in Isar. It is straightforward to show that the generic extension
$M[G]$ satisfies extensionality and foundation. Showing that it is
closed under Union depends on $G$ being a filter. Infinity is also
easy, but it depends in one further assumption.

For Extensionality in $M[G]$, the assumption 
$\forall w[M[G]]. w\in x \leftrightarrow w\in y$ yields 
$\forall w. w\in x \leftrightarrow w\in y$ by transitivity of $M[G]$. %
Therefore, by (ambient) Extensionality we conclude $x=y$. 

Foundation for $M[G]$ does not depend on $M[G]$ being transitive: in
this case we take $x\in M[G]$ and prove, relativized to $M[G]$,  that there is an
$\in$\kern -1pt-minimal element in $x$. Instantiating the global Foundation
Axiom for $x\cap M[G]$ we get a minimal $y$, so it is still minimal
when considered relative to $M[G]$. 

It is noteworthy that the proofs in the Isar dialect of Isabelle
strictly follow the argumentation of the two previous paragraphs.

The Union Axiom asserts that if $x$ is a set, then there exists
another set (the union of $x$) containing all the elements in each
element of $x$. The relativized version of Union asks to give a name
$\pi_a$ for each $a\in M[G]$ and proving $\val(G,\pi_a)=\union a$.
Let $\tau$ be the name for $a$, i.e.\ $a=\val(G,\tau)$; 
\citet{kunen2011set} gives $\pi_a$ in terms of $\tau$:
\begin{align*}
  \pi_a = \{\langle\theta,p \rangle :  %
\exists \langle\sigma,q\rangle  \in \tau .
 \exists r . \langle \theta,r\rangle \in \sigma \wedge
    p\leqslant r \wedge p \leqslant q \}
\end{align*}
Our formal definition is slightly different in order to ease the proof
of $\pi_a \in M$; as it is defined using Separation, so one needs to
define the domain of separation and also internalize the predicate as
a formula
\isa{union{\isacharunderscore}name{\isacharunderscore}fm}. Instead of
working directly with the internalized formula, we define a predicate
\isa{Union{\isacharunderscore}name{\isacharunderscore}body} and prove the equivalence between
\begin{center}
\isa{sats(M,union{\isacharunderscore}name{\isacharunderscore}fm,[P,leq,\isasymtau,x])}
\end{center}
and
\isa{Union{\isacharunderscore}Name{\isacharunderscore}body(P,leq,\isasymtau,x)}. The
definition of $\pi_a$ in our formalization is:
\begin{isabelle}
\isacommand{definition}\isamarkupfalse%
\ Union{\isacharunderscore}name\ {\isacharcolon}{\isacharcolon}\ {\isachardoublequoteopen}i\ {\isasymRightarrow}\ i{\isachardoublequoteclose}\ \isakeyword{where}\isanewline
\ \ {\isachardoublequoteopen}Union{\isacharunderscore}name{\isacharparenleft}{\isasymtau}{\isacharparenright}\ {\isacharequal}{\isacharequal}\ \isanewline
\ \ \ \ {\isacharbraceleft}u\ {\isasymin}\ domain{\isacharparenleft}{\isasymUnion}{\isacharparenleft}domain{\isacharparenleft}{\isasymtau}{\isacharparenright}{\isacharparenright}{\isacharparenright}\ {\isasymtimes}\ P\ {\isachardot}\isanewline
\ \ \ \  \ \ \ \ Union{\isacharunderscore}name{\isacharunderscore}body{\isacharparenleft}P{\isacharcomma}leq{\isacharcomma}{\isasymtau}{\isacharcomma}u{\isacharparenright}{\isacharbraceright}{\isachardoublequoteclose}
\end{isabelle}

Once we know $\pi_a \in M$, the equation $\val(G,\pi_a)=\union a$ is
proved by showing the mutual inclusion; in both cases one uses that
$G$ is a filter.
\begin{isabelle}
  \isacommand{lemma}\isamarkupfalse%
\ Union{\isacharunderscore}MG{\isacharunderscore}Eq\ {\isacharcolon}\ \isanewline
\ \ \isakeyword{assumes}\ {\isachardoublequoteopen}a\ {\isasymin}\ M{\isacharbrackleft}G{\isacharbrackright}{\isachardoublequoteclose}\ \isakeyword{and}\ {\isachardoublequoteopen}a\ {\isacharequal}\ val{\isacharparenleft}G{\isacharcomma}{\isasymtau}{\isacharparenright}{\isachardoublequoteclose}\ \isakeyword{and}\isanewline
\ \ \ \ \ \ \ \ \ \ {\isachardoublequoteopen}filter{\isacharparenleft}G{\isacharparenright}{\isachardoublequoteclose}\ \isakeyword{and}\ {\isachardoublequoteopen}{\isasymtau}\ {\isasymin}\ M{\isachardoublequoteclose}\isanewline
\ \ \isakeyword{shows}\ {\isachardoublequoteopen}{\isasymUnion}\ a\ {\isacharequal}\ val{\isacharparenleft}G{\isacharcomma}Union{\isacharunderscore}name{\isacharparenleft}{\isasymtau}{\isacharparenright}{\isacharparenright}{\isachardoublequoteclose}
\end{isabelle}
Since Union is absolute for any transitive class we may conclude that
$M[G]$ is closed under Union:
\begin{isabelle}
  \isacommand{lemma}\isamarkupfalse%
  \ union{\isacharunderscore}in{\isacharunderscore}MG\ {\isacharcolon}\ \isanewline
  \ \ \isakeyword{assumes}\ {\isachardoublequoteopen}filter{\isacharparenleft}G{\isacharparenright}{\isachardoublequoteclose}\isanewline
\ \ \isakeyword{shows}\ {\isachardoublequoteopen}Union{\isacharunderscore}ax{\isacharparenleft}{\isacharhash}{\isacharhash}M{\isacharbrackleft}G{\isacharbrackright}{\isacharparenright}{\isachardoublequoteclose}
\end{isabelle}

The proof of Infinity for $M[G]$ takes advantage of some absoluteness
results proved in the locale \isa{M{\isacharunderscore}trivial}; this
proof is easy because we work in the context of the locale
\isa{M{\isacharunderscore}extra{\isacharunderscore}assms} which states
the assumption $\chk(x) \in M$ whenever $x\in M$. Since we have
already proved that $M[G]$ is transitive, $\emptyset\in M[G]$ assuming
$G$ being generic, and also that it satisfies Pairing and Union, we
can instantiate \isa{M{\isacharunderscore}trivial}:
\begin{isabelle}
\isacommand{sublocale}\isamarkupfalse%
\ G{\isacharunderscore}generic\ {\isasymsubseteq}\ M{\isacharunderscore}trivial{\isachardoublequoteopen}{\isacharhash}{\isacharhash}M{\isacharbrackleft}G{\isacharbrackright}{\isachardoublequoteclose}
\end{isabelle}
We assume that $M$ satisfies Infinity, i.e., that Infinity relativized
to $M$ holds; therefore we obtain $I \in M$ such that $\emptyset\in I$
and, $x \in I$ implies $\mathit{succ}(x)\in I$ by absoluteness of
empty and successor for $M$. Using the assumption that $M$ is closed
under $\chk$, we deduce $\val(G,\chk(I)) = I \in M[G]$.  Now we can
use absoluteness of emptiness and successor, this time for $M[G]$, to
conclude that $M[G]$ satisfies Infinity.
\section{Forcing}
\label{sec:forcing}

For the most part, we follow Kunen \cite{kunen2011set}. As
an alternative, introductory 
resource, the  interested reader can check
\cite{chow-beginner-forcing}; the book \cite{weaver2014forcing}
contains a thorough treatment minimizing the technicalities.

Given a ctm $M$, and an $M$-generic filter $G\sbq\PP$, the Forcing
Theorems relate satisfaction of a formula 
$\phi$ in the generic extension $M[G]$ to the satisfaction of another
formula $\forceisa(\phi)$ in $M$. The map $\forceisa$ is defined by
recursion on 
the structure of $\phi$. It is to be noted that the base case (viz.,
for atomic $\phi$) contains all the complexity; the case for
connectives and quantifiers is then straightforward.
In order to state the properties of this map
in sufficient generality to prove that  $M[G]$ satisfies $\ZF$, we work with
internalized formulas, because it is not possible to carry inductive
arguments over \tyo.

We will now make more precise the properties of the map
$\forceisa$ and how it relates satisfaction in $M$ to that in
$M[G]$. Actually, if the formula $\phi$ has $n$ free variables,
$\forceisa(\phi)$ will have $n+4$ free variables, where the first four account
for the forcing notion and a particular element of it. 

We write $\phi(x_0,\dots,x_n)$ to indicate that the free variables of
$\phi$ are among $\{x_0,\dots,x_n\}$. In the case of a formula of the
form $\forceisa(\phi)$, we will make an abuse of notation and indicate
the variables inside the argument of $\forceisa$. As an example, take
the formula $\phi\defi x_1\in x_0$. Then
\[
M,[a,b] \models x_1\in x_0
\]
will hold whenever $b\in a$; and instead of writing $\forceisa(\phi)$
we will write $\forceisa(x_5\in x_4)$, as in
\[
M,[\PP,\leq,\1,p,\tau,\rho] \models \forceisa(x_5\in x_4).
\]

If
$\phi=\phi(x_0,\dots,x_n)$, the notation used by Kunen
\cite{kunen2011set,kunen1980} for $\forceisa(\phi)$ is 
\[
p\forces_{\PP,\leq,\1}^* \phi(x_0,\dots,x_n).
\]
Here, the extra parameters are $\PP,\leq,\1,$ and $p\in\PP$, and the
first three are usually omitted. %
Afterwards, the \emph{forcing relation}
$\forces$ can be obtained by 
interpreting $\forces^*$ in a ctm $M$, for fixed
$\lb\PP,\leq,\1\rb\in M$: $p\forces \phi(\tau_0,\dots,\tau_n)$ holds
if and only if
\begin{equation}\label{eq:3}
M,[\PP,\leq,\1,p,\tau_0,\dots,\tau_n]\models \forceisa(x_4,\dots,x_{n+4}).
\end{equation}

\subsection{The fundamental theorems}
\label{sec:fundamental-theorems}
Modern treatments of the theory of forcing start
by defining the 
forcing relation semantically and later it is proved  that the
characterization given by (\ref{eq:3}) indeed holds, and hence the
forcing relation is \emph{definable}.

Then the definition of the forcing relation is stated as a
\begin{lemma}[Definition of Forcing]\label{lem:definition-of-forcing}
  Let $M$ be a ctm of $\ZF$, $\lb\PP,\leq,\1\rb$ a forcing notion
  in $M$, $p\in\PP$, and $\phi(x_0,\dots,x_n)$ a formula in the
  language of set 
  theory with all free variables displayed. Then the
  following are equivalent, for all $\tau_0,\dots,\tau_n\in M$:
  \begin{enumerate}
  \item $M,[\PP,\leq,\1,p,\tau_0,\dots,\tau_n] 
  \models\forceisa(\phi(x_4,\dots,x_{n+4}))$.
  \item For all $M$-generic filters $G$ such that $p\in G$,
    \ifIEEE
    $M[G],[\val(G,\tau_0),\dots,\val(G,\tau_n)]
    \models\phi(x_0,\dots,x_n)$.
    \fi
    \ifarXiv
    \[
    M[G],[\val(G,\tau_0),\dots,\val(G,\tau_n)]
    \models\phi(x_0,\dots,x_n).
    \]
    \fi
  \end{enumerate}
\end{lemma}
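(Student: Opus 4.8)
The plan is to proceed by induction on the structure of the internalized formula $\phi$, following Paulson's datatype with constructors Member, Equal, Nand, and Forall. For each shape I would prove the stated equivalence, splitting it into a \emph{soundness} direction (from condition~(1), satisfaction of $\forceisa(\phi)$ in $M$, to truth in every relevant $M[G]$) and a \emph{completeness} direction (the converse). Because the Nand and Forall cases interleave these two directions, I would carry both implications simultaneously through the induction. Two auxiliary facts are used pervasively, so I would establish them first: (i) monotonicity, namely that condition~(1) is downward closed, so that $p\forces^*\phi$ together with $q\leq p$ yields $q\forces^*\phi$; and (ii) the existence of an $M$-generic filter through any prescribed condition, which is precisely the Rasiowa--Sikorski lemma formalized in our earlier report \cite{2018arXiv180705174G}. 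Throughout I write $p\forces^*\phi$ as shorthand for condition~(1), following the notation of Kunen reviewed above.

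First I would dispatch the atomic base cases $\sigma\in\tau$ and $\sigma=\tau$, which concentrate essentially all of the difficulty. Here $\forceisa$ is given by a simultaneous recursion on names, so the equivalence for the two atomic shapes must be proved by well-founded recursion on the name order (the relation $\edrel$ restricted to the transitive closure of the arguments, as in Section~\ref{sec:generic-extensions}), treating the membership and equality statements together. Soundness unfolds $\val(G,\tau)$ according to Equation~\eqref{eq:val}, while completeness extracts, from a membership or equality holding in $M[G]$, a condition lying in $G$ that forces the atomic statement; genericity enters precisely to meet the relevant dense sets. I expect this to be the main obstacle, both mathematically and in the mechanization, since it forces the recursive characterization of $\forces^*$ to be reconciled with the recursive unfolding of $\val$, all while tracking the four-place offset of the environment $[\PP,\leq,\1,p,\dots]$ against the variables of $\phi$.

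For the inductive cases the argument is uniform and rests on the two auxiliary facts. For Forall I would use that the universal form is forced at $p$ exactly when every instance $\phi(\sigma)$ is forced at $p$ for all names $\sigma\in M$: soundness is then immediate from the inductive hypothesis together with the fact that every element of $M[G]$ is $\val(G,\sigma)$ for some $\sigma\in M$, and completeness follows by applying the inductive hypothesis instance by instance. For Nand, encoding $\neg(\phi\wedge\psi)$, I would unfold $p\forces^*\mathrm{Nand}(\phi,\psi)$ as: no $q\leq p$ forces both $\phi$ and $\psi$. For soundness, were both $\phi$ and $\psi$ to hold in some $M[G]\ni p$, the inductive hypothesis would supply conditions in $G$ forcing each; downward directedness of the filter then yields a common lower bound $r\in G$ with $r\leq p$ forcing both, contradicting $p\forces^*\mathrm{Nand}(\phi,\psi)$. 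For completeness, any hypothetical witness $q\leq p$ forcing both $\phi$ and $\psi$ would, via a generic filter through $q$ (upward closed, hence containing $p$) and soundness of the inductive hypothesis, produce a generic extension violating condition~(2).

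The genuine hurdles are thus concentrated in the atomic base case and in the bookkeeping of environments and arities that the internalized representation demands. In the mechanization this bookkeeping is nontrivial: at each step one must verify that the manipulated environments are lists over $M$ of the correct length and that the free-variable count of $\forceisa(\phi)$ exceeds that of $\phi$ by four (modulo the usual decrement from Forall). The renaming machinery of Section~\ref{sec:renaming} is exactly what aligns variables when names are substituted in the quantifier case.
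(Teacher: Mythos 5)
You are proposing a proof for a statement that the paper itself never proves. Lemma~\ref{lem:definition-of-forcing} is one of the Fundamental Theorems that the authors deliberately black-box: it appears verbatim as the assumption \isatt{definition\_of\_forces} of the locale \isatt{forcing\_thms}, following the \emph{formal abstract} strategy described in the introduction and in Section~\ref{sec:forcing}. The paper's stance is that the satisfaction proofs for $M[G]$ are independent of the \emph{proofs} of the forcing theorems, whose formalization (together with the definition of $\forceisa$ itself) is deferred and estimated at barely less than half of the full project. So there is no proof in the paper to compare yours against; the only hints given are the remark that \enquote{all these results are proved by recursion in $\formula$} and the citation of \cite[IV.2.43]{kunen2011set}. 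Your sketch is indeed the standard Kunen-style argument that such a formalization would have to follow --- atomic cases by well-founded recursion on names, the rest by induction on the formula, Rasiowa--Sikorski to produce generic filters through a prescribed condition --- and in that sense it goes beyond anything the paper actually does.

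As a proof outline, though, it has one structural flaw worth fixing. The pair of statements you thread through the induction --- soundness, $(1)\Rightarrow(2)$, and completeness, $(2)\Rightarrow(1)$ --- is not a self-sustaining induction hypothesis. In your Nand soundness case you pass from \enquote{$\phi$ and $\psi$ hold in $M[G]$} to \enquote{some conditions in $G$ force $\phi$, resp.\ $\psi$}; that inference is the Truth Lemma (Lemma~\ref{lem:truth-lemma}) for the subformulas, and it does \emph{not} follow from the equivalence $(1)\Leftrightarrow(2)$: the equivalence only yields a forcing condition when truth is known in \emph{every} generic extension through some condition, whereas here you only know truth in the single extension $M[G]$. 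The standard repair is to load the induction with soundness together with the Truth Lemma (and Strengthening and Density, Lemmas~\ref{lem:strengthen} and~\ref{lem:density}, in Kunen's arrangement), and then recover $(2)\Rightarrow(1)$ at the end by a density argument: given $r\leq p$, Rasiowa--Sikorski yields a generic $G\ni r$; upward closure of the filter gives $p\in G$, hence $M[G]\models\phi$, hence by Truth some $q\in G$ forces $\phi$; a common lower bound of $q$ and $r$ in $G$ forces $\phi$ by Strengthening, so the conditions forcing $\phi$ are dense below $p$, and Density concludes that $p$ forces $\phi$. With that reorganization your outline matches the intended (but here unformalized) argument.
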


The \emph{Truth Lemma} states that the forcing
relation indeed relates 
satisfaction in $M[G]$ to that in $M$. 
\begin{lemma}[Truth Lemma]\label{lem:truth-lemma}
  Assume the same hypothesis of
  Lemma~\ref{lem:definition-of-forcing}. Then the
  following are equivalent, for all $\tau_0,\dots,\tau_n\in M$, and
  $M$-generic $G$: 
  \begin{enumerate}
  \item $M[G],[\val(G,\tau_0),\dots,\val(G,\tau_n)]
  \models\phi(x_0,\dots,x_n)$.
  \item  There exists $p\in G$ such that 
    \ifIEEE 
    $M,[\PP,\leq,\1,p,\tau_0,\dots,\tau_n] 
    \models\forceisa(\phi(x_4,\dots,x_{n+4}))$.
    \fi
    \ifarXiv 
    \[ 
    M,[\PP,\leq,\1,p,\tau_0,\dots,\tau_n] 
    \models\forceisa(\phi(x_4,\dots,x_{n+4})).
    \]
    \fi
  \end{enumerate}
\end{lemma}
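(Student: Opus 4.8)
The plan is to derive the Truth Lemma from the Definition of Forcing (Lemma~\ref{lem:definition-of-forcing}), handling the two implications separately. For the easy direction $(2)\Rightarrow(1)$, suppose there is $p\in G$ with $M,[\PP,\leq,\1,p,\tau_0,\dots,\tau_n]\models\forceisa(\phi(x_4,\dots,x_{n+4}))$. Since our $G$ is itself an $M$-generic filter \emph{containing} $p$, I would simply instantiate the universally quantified filter in clause (2) of Lemma~\ref{lem:definition-of-forcing} at this very $G$, obtaining $M[G],[\val(G,\tau_0),\dots,\val(G,\tau_n)]\models\phi(x_0,\dots,x_n)$. Nothing more is needed here.

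The substantial direction is $(1)\Rightarrow(2)$, for which I would run the classical density argument inside $M$. Assume $M[G],[\val(G,\tau_0),\dots,\val(G,\tau_n)]\models\phi$. For a condition $q$ abbreviate the environment $[\PP,\leq,\1,q,\tau_0,\dots,\tau_n]$ by $e_q$, and consider the set of conditions that \emph{decide} $\phi$,
\[
D \;=\; \{\, q\in\PP : M,e_q\models\forceisa(\phi)\,\}
      \;\cup\; \{\, q\in\PP : M,e_q\models\forceisa(\neg\phi)\,\}.
\]
Because $\forceisa(\phi)$ and $\forceisa(\neg\phi)$ are internalized formulas and $\PP,\leq,\1,\tau_0,\dots,\tau_n\in M$, the set $D$ is definable over $M$; hence, by one instance of Separation in $M$, $D\in M$. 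Genericity of $G$ then yields $G\cap D\neq\emptyset$, so I pick $p\in G\cap D$. If $p$ forced $\neg\phi$, applying Lemma~\ref{lem:definition-of-forcing} to the formula $\neg\phi$ with our own $G$ (note $p\in G$) would give $M[G]\models\neg\phi$, contradicting the assumption. Therefore $p$ forces $\phi$, which is exactly clause (2).

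The single step carrying all the content is the density of $D$: given any $r\in\PP$ I must produce $q\leq r$ that decides $\phi$. If some $q\leq r$ forces $\phi$ we are done; otherwise I want to conclude that $r$ itself forces $\neg\phi$. This rests on the characterization that $r\forces\neg\phi$ holds precisely when no extension of $r$ forces $\phi$, which is exactly the recursive clause of $\forceisa$ for negation. I expect this to be the main obstacle: proving it from scratch would require a simultaneous induction on the structure of $\phi$ together with Lemma~\ref{lem:definition-of-forcing}, so in the present development the cleanest route is to take this negation clause (together with the companion strengthening property, $p\forces\phi \wedge q\leq p \Rightarrow q\forces\phi$) as one of the fundamental properties of $\forceisa$ packaged in the \texttt{forcing\_thms} locale, and to invoke it here.

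On the formalization side, the whole argument lives in the locale extending \texttt{forcing\_thms} by an $M$-generic filter. The only genuinely set-theoretic inputs are that $G$ meets every dense subset of $\PP$ belonging to $M$, that $\PP$ and the displayed parameters are in $M$, and the one instance of Separation used to build the internalized formula defining $D$; the passage between the external predicate ``$q$ decides $\phi$'' and its internalization proceeds exactly as in the Union example of Section~\ref{sec:easy-axioms}. Everything else is bookkeeping with $\val$ and the environments $[\PP,\leq,\1,p,\tau_0,\dots,\tau_n]$.
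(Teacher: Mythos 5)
Your argument is classically sound, but note that it is \emph{not} the paper's route, because the paper never proves the Truth Lemma at all: in its architecture, Lemma~\ref{lem:definition-of-forcing}, the Truth Lemma, Strengthening and Density are all independent \emph{assumptions} of the locale \texttt{forcing\_thms} --- a ``formal abstract'' whose discharge, by recursion on internalized formulas, is deferred to the (roughly half of the) project concerning the actual definition of $\forceisa$. You instead derive the Truth Lemma from the Definition of Forcing, and you correctly isolate the one point where this cannot be done from the paper's stated assumptions alone: the density of the set $D$ of deciding conditions needs the negation clause $r\forces\neg\phi \iff \forall q\leq r.\,\neg(q\forces\phi)$, and without it the argument is genuinely circular (to get from ``no extension of $r$ forces $\phi$'' to $r\forces\neg\phi$ semantically one would already need the Truth Lemma for $\phi$ over arbitrary generic filters). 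Granting that clause --- which is definitional for Kunen's syntactic $\forces^*$, hence a cheap and legitimate addition to the locale --- your derivation is correct: $(2)\Rightarrow(1)$ is a pure instantiation of the universally quantified filter at $G$, and $(1)\Rightarrow(2)$ follows from $D\in M$, genericity, and the Definition of Forcing applied to $\neg\phi$, where membership of $D$ in $M$ requires one Separation instance in $M$ with parameters $\PP,\leq,\1,\tau_0,\dots,\tau_n$, obtained through the same internalization and arity bookkeeping as the six-variable instance \isatt{six\_sep\_aux} used in the paper's Separation proof. As for what each design buys: the paper's locale mirrors the statements of the literature verbatim and stays agnostic about how $\forceisa$ is defined, so it can be instantiated by any correct implementation of the fundamental theorems; your variant trades the Truth Lemma assumption for a recursive clause of $\forceisa$, shrinking the trusted interface, at the price of committing the locale to one specific shape of the definition of $\forceisa$ and of moving a real density-plus-Separation argument (with its nontrivial internalization overhead in this development) inside the locale's derived results.
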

\noindent The previous two results combined are the ones usually called the
\emph{fundamental theorems}. 

The following auxiliary results (adapted from
\cite[IV.2.43]{kunen2011set}) are also handy in forcing arguments.
\begin{lemma}[Strengthening]\label{lem:strengthen} 
  Assume the same hypothesis of Lemma~\ref{lem:definition-of-forcing}.
  $M, [\PP,\leq,\1,p,\dots]\models\forceisa(\phi)$ and $p_1\leq p$
  implies $M, [\PP,\leq,\1,p_1,\dots]\models\forceisa(\phi)$.
\end{lemma}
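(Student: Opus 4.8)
The plan is to prove Strengthening semantically, treating the Definition of Forcing (Lemma~\ref{lem:definition-of-forcing}) as a black box rather than re-running the recursion on $\phi$ that defines $\forceisa$. The only structural fact about the forcing notion I will need is that every filter is \emph{increasing} (upward closed): if $p_1\in G$ and $p_1\leq p$, then $p\in G$, which is immediate from the definition of filter recalled in Section~\ref{sec:generic-extensions}.

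First I would unpack the hypothesis. Fixing the environment $[\PP,\leq,\1,p,\tau_0,\dots,\tau_n]$, assume $M,[\PP,\leq,\1,p,\tau_0,\dots,\tau_n]\models\forceisa(\phi)$. Applying the forward direction of Lemma~\ref{lem:definition-of-forcing} at the condition $p$ yields: for every $M$-generic filter $G$ with $p\in G$, one has $M[G],[\val(G,\tau_0),\dots,\val(G,\tau_n)]\models\phi$. This is the semantic content of ``$p$ forces $\phi$'', and it is all I will use.

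Next, to obtain the conclusion $M,[\PP,\leq,\1,p_1,\tau_0,\dots,\tau_n]\models\forceisa(\phi)$, I would invoke the reverse direction of Lemma~\ref{lem:definition-of-forcing}, now at the condition $p_1$. It therefore suffices to verify its semantic side: that for every $M$-generic $G'$ with $p_1\in G'$ one has $M[G'],[\val(G',\tau_0),\dots,\val(G',\tau_n)]\models\phi$. So fix such a $G'$. Since $p_1\leq p$ and $G'$ is a filter containing $p_1$, upward closure gives $p\in G'$; feeding $G'$ into the statement produced in the previous step (which applies precisely because $p\in G'$) delivers $M[G']\models\phi$, as required. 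As $G'$ was arbitrary, the reverse direction of Lemma~\ref{lem:definition-of-forcing} closes the argument. Notice that the equivalence of that lemma is used in \emph{both} directions: forward at $p$ and backward at $p_1$.

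The mathematics here is short because all the real content is already packaged in Lemma~\ref{lem:definition-of-forcing}; what I expect to be the main obstacle in the mechanization is bookkeeping rather than substance. Concretely, I would have to keep the arity of $\phi$ and the four-parameter offset (for $\PP,\leq,\1$ and the condition) aligned across the two applications of the lemma, and discharge the side conditions that $\langle\PP,\leq,\1\rangle$ is a forcing notion and that $p,p_1\in\PP$, so that the filter lemma is applicable. One subtlety worth flagging is that the quantification over generic $G'$ could a priori be vacuous, but since we use the equivalence of Lemma~\ref{lem:definition-of-forcing} directly this is harmless, and in any case for a ctm generic filters meeting any prescribed condition exist by Rasiowa--Sikorski.
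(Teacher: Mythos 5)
Your proof is correct, but it takes a genuinely different route from the paper's. Given Lemma~\ref{lem:definition-of-forcing} as a black box, Strengthening does follow by pure logic plus the fact that filters are increasing: any $M$-generic $G'$ containing $p_1$ also contains $p$, so the semantic side at $p_1$ is inherited from the semantic side at $p$, with the equivalence applied forward at $p$ and backward at $p_1$, exactly as you argue. The paper, however, states that Strengthening and Density (adapted from Kunen, IV.2.43) ``are proved by recursion in $\formula$'', i.e.\ by induction on the structure of internalized formulas, working directly with the definition of the map $\forceisa$. The reason for that choice is logical order: in Kunen's development, which the authors intend to formalize, these auxiliary lemmas are established \emph{before} the Definition of Forcing and the Truth Lemma, and they are used as ingredients in the proofs of those fundamental theorems; a semantic derivation like yours would be circular at that stage. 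What your argument buys is a genuine observation about the paper's current black-box setting: inside the locale \isatt{forcing\_thms}, where both Lemma~\ref{lem:definition-of-forcing} and Strengthening appear as separate assumptions on $\forceisa$, your derivation shows the Strengthening assumption to be redundant, so the locale could in principle be slimmed down. What the recursion-on-$\formula$ proof buys, conversely, is availability at the stage where $\forceisa$ is being constructed and its semantic characterization has not yet been proved. Your side remarks are also sound: upward closure is part of the paper's definition of filter in Section~\ref{sec:generic-extensions}, and the possible vacuity of the quantification over generic filters is harmless since the argument only chases implications, so no appeal to Rasiowa--Sikorski is actually needed.
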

\begin{lemma}[Density]\label{lem:density}
  Assume the same hypothesis of
  Lemma~\ref{lem:definition-of-forcing}. $M,[\PP,\leq,\1,p,\dots]\models\forceisa(\phi)$   
  if and only if 
  \[
  \{p_1\in \PP : M,[\PP,\leq,\1,p_1,\dots]\models
  \forceisa(\phi)\}
  \]
  is dense below $p$.
\end{lemma}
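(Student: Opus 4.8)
The plan is to derive this from the two fundamental theorems together with Strengthening, following the classical argument of \cite[IV.2.43]{kunen2011set}. Write $D = \{p_1\in\PP : M,[\PP,\leq,\1,p_1,\dots]\models\forceisa(\phi)\}$ for the set appearing in the statement. First I would record that $D$ is definable in $M$ by the internalized formula $\forceisa(\phi)$ with the displayed environment, so $D\in M$ by the instance of Separation in $M$ that cuts $\PP$ down according to satisfaction of $\forceisa(\phi)$; this membership is exactly the point where definability of forcing is used.

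For the forward implication, assume $M,[\PP,\leq,\1,p,\dots]\models\forceisa(\phi)$. Given any $q\leq p$, Strengthening (Lemma~\ref{lem:strengthen}) immediately yields $M,[\PP,\leq,\1,q,\dots]\models\forceisa(\phi)$, that is $q\in D$; since $q\leq q$, this already witnesses density of $D$ below $q$. As $q\leq p$ was arbitrary, $D$ is dense below $p$ (in fact the whole cone below $p$ is contained in $D$).

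For the converse, assume $D$ is dense below $p$. By the Definition of Forcing (Lemma~\ref{lem:definition-of-forcing}) it suffices to show that $M[G]\models\phi$ for every $M$-generic filter $G$ with $p\in G$. Fix such a $G$. The key step is to pass from density \emph{below $p$} to an actual meeting of $D$ with $G$: consider
\[
D' = D \cup \{q\in\PP : q \text{ is incompatible with } p\}.
\]
Then $D'\in M$ and $D'$ is dense in $\PP$, since for any $r\in\PP$ either $r$ is incompatible with $p$, whence $r\in D'$, or $r$ is compatible with $p$, so some $s$ extends both $r$ and $p$ and density of $D$ below $p$ gives $t\in D$ with $t\leq s\leq r$. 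By genericity $G\cap D'\neq\emptyset$; but $p\in G$ and any two members of a filter are downwards compatible, so $G$ contains no condition incompatible with $p$, forcing $G\cap D'\sbq D$. Hence there is $q\in G$ with $q\in D$, i.e.\ $M,[\PP,\leq,\1,q,\dots]\models\forceisa(\phi)$, and the Truth Lemma (Lemma~\ref{lem:truth-lemma}) then delivers $M[G]\models\phi$, as required.

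The main obstacle is the converse direction, specifically the passage from ``dense below $p$'' to ``$G$ meets $D$'': a generic filter is only guaranteed to meet sets that are dense in all of $\PP$, so the auxiliary set $D'$ must be introduced, shown to be dense, and shown to lie in $M$, while the argument that $G\cap D'$ avoids the incompatible part rests on $p\in G$ and the pairwise compatibility of filter elements. Everything else reduces to a direct invocation of the already-assumed fundamental theorems and Strengthening; in the mechanization the only extra subtlety worth flagging is again the membership $D\in M$, which is an instance of Separation in $M$ for the internalized formula $\forceisa(\phi)$.
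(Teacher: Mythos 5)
Your argument is mathematically sound, but it is not the paper's proof --- in fact the paper does not prove Lemma~\ref{lem:density} at all within the formalization. Density, like Strengthening, the Truth Lemma, and the Definition of Forcing, is \emph{assumed}: all four are packaged as hypotheses of the locale \isatt{forcing\_thms}, and the paper only remarks that, metatheoretically (following \cite[IV.2.43]{kunen2011set}), these results ``are proved by recursion in $\formula$,'' i.e.\ by induction on the structure of the internalized formula, with all the complexity in the atomic case. What you do instead is derive Density from the other three assumptions: Strengthening (Lemma~\ref{lem:strengthen}) gives the forward direction (indeed the whole cone below $p$ lands in $D$), and for the converse you reduce, via the Definition of Forcing (Lemma~\ref{lem:definition-of-forcing}), to showing that $M[G]$ satisfies $\phi$ for each generic $G\ni p$, then use the standard trick $D'=D\cup\{q\in\PP : q\perp p\}$ (dense in all of $\PP$, and in $M$ by Separation applied to $\forceisa(\phi)$ --- this is exactly where definability enters), genericity, and pairwise compatibility of filter elements to find $q\in G\cap D$, after which the Truth Lemma (Lemma~\ref{lem:truth-lemma}) closes the argument. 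This is a correct and genuinely different route, and what it buys is the observation that the Density assumption in \isatt{forcing\_thms} is redundant given the others, so the ``formal abstract'' interface could be made leaner. What it cannot do is replace the inductive proof in the projected full development: there, density and strengthening properties of the syntactic relation $\forces^*$ are established by recursion on $\formula$ \emph{before}, and are used \emph{in}, the proofs of the Truth Lemma and the Definition of Forcing, so a derivation that consumes those two theorems would be circular as a ground-up proof. Note also that your appeal to Separation in $M$ needs the specific finite-parameter instances available in the locale \isatt{M\_ZF}, a point the paper is explicitly careful about (cf.\ the six-variable instance used for Separation in $M[G]$).
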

All these results are proved by recursion in
$\formula$.

The locale \isatt{forcing\_thms} includes all these results as assumptions on the
mapping $\forceisa$, plus a typing condition  and its
effect on arities:
\begin{isabelle}
\isacommand{locale}\isamarkupfalse%
\ forcing{\isacharunderscore}thms\ {\isacharequal}\ forcing{\isacharunderscore}data\ {\isacharplus}\isanewline
\ \isakeyword{fixes}\ forces\ {\isacharcolon}{\isacharcolon}\ {\isachardoublequoteopen}i\ {\isasymRightarrow}\ i{\isachardoublequoteclose}\isanewline
\ \isakeyword{assumes}\
definition{\isacharunderscore}of{\isacharunderscore}forces{\isacharcolon}\isanewline 
\ \ {\isachardoublequoteopen}p{\isasymin}P\ {\isasymLongrightarrow}\ {\isasymphi}{\isasymin}formula\ {\isasymLongrightarrow}\ env{\isasymin}list{\isacharparenleft}M{\isacharparenright}\ {\isasymLongrightarrow}\isanewline
\ \ \ \  sats{\isacharparenleft}M{\isacharcomma}forces{\isacharparenleft}{\isasymphi}{\isacharparenright}{\isacharcomma}\ {\isacharbrackleft}P{\isacharcomma}leq{\isacharcomma}one{\isacharcomma}p{\isacharbrackright}\ {\isacharat}\ env{\isacharparenright}\ {\isasymlongleftrightarrow}\isanewline
\ \ \ \ \ \
{\isacharparenleft}{\isasymforall}G{\isachardot}{\isacharparenleft}M{\isacharunderscore}generic{\isacharparenleft}G{\isacharparenright}{\isasymand}\
p{\isasymin}G{\isacharparenright}\ {\isasymlongrightarrow}\isanewline 
\ \ \ \ \ \ sats{\isacharparenleft}M{\isacharbrackleft}G{\isacharbrackright}{\isacharcomma}{\isasymphi}{\isacharcomma}map{\isacharparenleft}val{\isacharparenleft}G{\isacharparenright}{\isacharcomma}env{\isacharparenright}{\isacharparenright}{\isacharparenright}{\isachardoublequoteclose}\isanewline
\ \ \isakeyword{and}\ \ definability{\isacharbrackleft}TC{\isacharbrackright}{\isacharcolon}\ {\isachardoublequoteopen}{\isasymphi}{\isasymin}formula\ {\isasymLongrightarrow}\isanewline
\ \ \ \ \ \ \ \ \  forces{\isacharparenleft}{\isasymphi}{\isacharparenright}\ {\isasymin}\ formula{\isachardoublequoteclose}\isanewline
\ \ \isakeyword{and}\ \ \ arity{\isacharunderscore}forces{\isacharcolon}\ \ {\isachardoublequoteopen}{\isasymphi}{\isasymin}formula\ {\isasymLongrightarrow}\isanewline
\ \ \ \ \ \ \
arity{\isacharparenleft}forces{\isacharparenleft}{\isasymphi}{\isacharparenright}{\isacharparenright}\
{\isacharequal}\ arity{\isacharparenleft}{\isasymphi}{\isacharparenright}\ {\isacharhash}{\isacharplus}\ {\isadigit{4}}{\isachardoublequoteclose}\isanewline
\ \ \isakeyword{and}\ \dots
\end{isabelle}

The presentation of the Fundamental Theorems of Forcing in a locale
can be regarded as a \emph{formal abstract} as in the project
envisioned by Hales \cite{hales-fabstracts}, where  statements of
mathematical theorems proven in the literature are posed in a language
that is both human- and computer-readable. The point is to take
particular care so that, v.g., there are no missing hypotheses, so it
is possible to take this statement as firm ground on which to start a
formalization of a proof. 

\section{Separation and Powerset}
\label{sec:proof-separation}
We proceed to describe in detail the main goal of this paper, the formalization
of the proof of the Separation Axiom. Afterwards, we sketch  the
implementation of the Powerset Axiom.

This proof of Separation can be found in the file
\verb|Separation_Axiom.thy|. The order chosen to
implement
the proof sought to minimize the cross-reference of facts;  it is
not entirely appropriate for a text version, so we depart from it in
this presentation. Nevertheless, we will refer to each specific block
of code by line number for ease of reference.

The key technical result is the following:
\begin{isabelle}
  \isacommand{lemma}\isamarkupfalse%
  \ Collect{\isacharunderscore}sats{\isacharunderscore}in{\isacharunderscore}MG\ {\isacharcolon}\isanewline
  \ \ \isakeyword{assumes}\isanewline
  \ \ \ \ {\isachardoublequoteopen}{\isasympi}\ {\isasymin}\ M{\isachardoublequoteclose}\ {\isachardoublequoteopen}{\isasymsigma}\ {\isasymin}\ M{\isachardoublequoteclose}\ {\isachardoublequoteopen}val{\isacharparenleft}G{\isacharcomma}\ {\isasympi}{\isacharparenright}\ {\isacharequal}\ c{\isachardoublequoteclose}\isanewline
  \ \ \ \  {\isachardoublequoteopen}val{\isacharparenleft}G{\isacharcomma}\ {\isasymsigma}{\isacharparenright}\ {\isacharequal}\ w{\isachardoublequoteclose}\isanewline
  \ \ \ \ {\isachardoublequoteopen}{\isasymphi}\ {\isasymin}\ formula{\isachardoublequoteclose}\ {\isachardoublequoteopen}arity{\isacharparenleft}{\isasymphi}{\isacharparenright}\ {\isasymle}\ {\isadigit{2}}{\isachardoublequoteclose}\isanewline
  \ \ \isakeyword{shows}\ \ \ \ \isanewline
  \ \ \ \ {\isachardoublequoteopen}{\isacharbraceleft}x{\isasymin}c{\isachardot}\ sats{\isacharparenleft}M{\isacharbrackleft}G{\isacharbrackright}{\isacharcomma}\ {\isasymphi}{\isacharcomma}\ {\isacharbrackleft}x{\isacharcomma}\ w{\isacharbrackright}{\isacharparenright}{\isacharbraceright}{\isasymin}\ M{\isacharbrackleft}G{\isacharbrackright}{\isachardoublequoteclose}
\end{isabelle}
From this, using absoluteness, we will be able to derive the
$\phi$-instance of Separation. 

To show that   
\[
S\defi\{x\in c : M[G],[x,w]\models \phi(x_0,x_1)\} \in M[G],
\]
it is enough to provide a name $n\in M$ for this set.
 
The candidate name is
\begin{equation}\label{eq:4}
n \defi \{u \in\dom(\pi)\times\PP :M,[u,\PP,\leq,\1,\sig,\pi]\models \psi\}
\end{equation}
where
\[
\psi \defi \exists \th\, p.\ x_0=\lb\th,p\rb \y 
   \forceisa(\th\in x_5\y\phi(\th,x_4)).
\]
The fact that $n\in M$ follows (lines 216--220 of the source file) by
an application of a six-variable instance of Separation in $M$ (lemma
\isatt{six{\isacharunderscore}sep{\isacharunderscore}aux}). We note in
passing that it is possible to  abbreviate expressions in Isabelle by
the use of \textbf{let} statements or  \textbf{is} qualifiers,
and metavariables (whose
identifiers start with a question mark). In this way, the definition in
(\ref{eq:4}) appears in the sources as letting \isatt{?n} to be that
set (lines 208--211).

Almost a third part of the proof involves the syntactic handling of
internalized formulas and permutation of variables. The more
substantive portion concerns proving that actually $\val(G,n)=S$.

Let's first focus into the predicate 
\begin{equation}\label{eq:1}
M,[u,\PP,\leq,\1,\sig,\pi]\models \psi
\end{equation}
defining $n$ by separation. By definition of the satisfaction
relation and %
absoluteness, we have (lines 92--98) that it is equivalent to the fact
that there exist $\th,p\in M$ with   $u=\lb\th,p\rb$  and 
\[
M,[\PP,\leq,\1,p,\th,\sig,\pi]\models \forceisa(x_4\in
x_6\y\phi(x_4,x_5)). 
\]
This, in turn, is equivalent by the Definition of Forcing to: \emph{For all $M$-generic
filters $F$ such that $p\in F$,} 
\begin{equation}\label{eq:2}
M[F],[\val(F,\th),\val(F,\sig),\val(F,\pi)]\models x_0\in
x_2\y\phi(x_0,x_1). 
\end{equation}
(lines 99--185). We can instantiate this statement with $G$ and obtain
(lines 186--206)
\[
p\in G \impl M[G],[\val(G,\th),w,c]\models x_0\in
x_2\y\phi(x_0,x_1). 
\] 
Let $Q(\th,p)$ be the last displayed statement. We have just seen that
(\ref{eq:1}) implies 
\[
\exists \th,p\in M.\ u=\lb\th,p\rb \y Q(\th,p).
\]
Hence (lines 207-212) $n$ is included in 
\[
m\defi \{u \in\dom(\pi)\times\PP : \exists \th,p\in M.\ u=\lb\th,p\rb
\y Q(\th,p)\}. 
\]

Since $m$ is a name defined using Separation, we may use
(\ref{eq:val-name-sep}) to show (lines 221--274 of
\isatt{Separation\_Axiom})
\begin{equation}
  \label{eq:val-of-m}
  \val(G,m) = \{x\in c : M[G],[x,w,c]\models \phi(x_0,x_1)\}.
\end{equation}
The right-hand side is trivially equal to $S$, but as a consequence of
the definition of 
\isatt{separation\_ax}, the result contains an extra $c$ in the
environment.

Also, by monotonicity of $\val$ we obtain 
  $\val(G,n)\sbq \val(G,m)$ (lines 213--215).
To complete the proof, it is therefore enough to show the other
inclusion (starting at line 275).
For this, let $x\in \val(G,m) = S$ and then $x\in c$. Hence there exists
$\lb\th,q\rb\in\pi$ such that $q\in G$ and $x=\val(G,\th)$. 

On the other hand, since (line 297)
\[
M[G],[\val(G,\th),\val(G,\sig),\val(G,\pi)]\models
 x_0\in x_2\y\phi( x_0, x_1),
\]
by the  Truth Lemma there must exist $r\in G$ such that
\[
M,[\PP,\leq,\1,r,\th,\sig,\pi]\models
\forceisa(x_4\in x_6\y\phi( x_4, x_5)).
\]
Since $G$ is a filter, there is $p\in G$ such that $p\leq q, r$.
By Strengthening, we have
\[
M,[\PP,\leq,\1,p,\th,\sig,\pi]\models
\forceisa(x_4\in x_6\y \phi( x_4, x_5)),
\]
which by the Definition of Forcing gives us (lines 315--318): \emph{for all $M$-generic $F$,
  $p\in F$ implies} 
\[
M[F],[\val(F,\th),\val(F,\sig),\val(F,\pi)]\models
 x_0\in  x_2 \y\phi( x_0, x_1).
\]
Note this is the same as (\ref{eq:2}). Hence, tracing the equivalence
up to (\ref{eq:1}), we can show that $x=\val(G,\th)\in \val(G,n)$
(lines 319--337), finishing the main lemma.

The last 20 lines of the theory show, using absoluteness, the two
instances of Separation for $M[G]$:

\begin{isabelle}
\isacommand{theorem}\isamarkupfalse%
\ separation{\isacharunderscore}in{\isacharunderscore}MG{\isacharcolon}\isanewline
\ \ \isakeyword{assumes}\ \isanewline
\ \ \ \ {\isachardoublequoteopen}{\isasymphi}{\isasymin}formula{\isachardoublequoteclose}\ \isakeyword{and}\isanewline
\ \ \ \  {\isachardoublequoteopen}arity{\isacharparenleft}{\isasymphi}{\isacharparenright}\ {\isacharequal}\ {\isadigit{1}}\ {\isasymor}\ arity{\isacharparenleft}{\isasymphi}{\isacharparenright}{\isacharequal}{\isadigit{2}}{\isachardoublequoteclose}\isanewline
\ \ \isakeyword{shows}\ \ \isanewline
\ \ \ \ {\isachardoublequoteopen}{\isasymforall}a{\isasymin}{\isacharparenleft}M{\isacharbrackleft}G{\isacharbrackright}{\isacharparenright}{\isachardot}\isanewline 
 \ \ \ \ \  separation{\isacharparenleft}{\isacharhash}{\isacharhash}M{\isacharbrackleft}G{\isacharbrackright}{\isacharcomma}{\isasymlambda}x{\isachardot}sats{\isacharparenleft}M{\isacharbrackleft}G{\isacharbrackright}{\isacharcomma}{\isasymphi}{\isacharcomma}{\isacharbrackleft}x{\isacharcomma}a{\isacharbrackright}{\isacharparenright}{\isacharparenright}{\isachardoublequoteclose}
\end{isabelle}
   
We now turn to the Powerset Axiom. We followed the proof of
\cite[IV.2.27]{kunen2011set}, to which we refer the reader for further
details. Actually, the main technical result,
\begin{isabelle}
\isacommand{lemma}\isamarkupfalse%
\ Pow{\isacharunderscore}inter{\isacharunderscore}MG{\isacharcolon}\isanewline
\ \ \isakeyword{assumes}\isanewline
\ \ \ \ {\isachardoublequoteopen}a{\isasymin}M{\isacharbrackleft}G{\isacharbrackright}{\isachardoublequoteclose}\isanewline
\ \ \isakeyword{shows}\isanewline
\ \ \ \ {\isachardoublequoteopen}Pow{\isacharparenleft}a{\isacharparenright}\ {\isasyminter}\ M{\isacharbrackleft}G{\isacharbrackright}\ {\isasymin}\ M{\isacharbrackleft}G{\isacharbrackright}{\isachardoublequoteclose}
\end{isabelle}
keeps most of the structure of the printed proof; this ``skeleton'' of the
argument takes around 120 (short) lines, where we tried to  preserve the
names of variables used in the textbook (with the occasional question
mark that distinguishes meta-variables). There are approximately 30
more lines of bureaucracy in the proof of the last lemma. 

Two more
absoluteness lemmas concerning powersets were needed: These are
refinements of results (\isatt{powerset\_Pow} and
\isatt{powerset\_imp\_subset\_Pow}) located in the theory
\isatt{Relative} 
where we weakened the assumption ``$y\in M$'' (\isatt{M(y)}) to 
``$y\sbq  M$'' (second assumption below).  
\begin{isabelle}
\isacommand{lemma}\isamarkupfalse%
\ {\isacharparenleft}\isakeyword{in}\ M{\isacharunderscore}trivial{\isacharparenright}\ powerset{\isacharunderscore}subset{\isacharunderscore}Pow{\isacharcolon}\isanewline
\ \ \isakeyword{assumes}\ \isanewline
\ \ \ \ {\isachardoublequoteopen}powerset{\isacharparenleft}M{\isacharcomma}x{\isacharcomma}y{\isacharparenright}{\isachardoublequoteclose}\ {\isachardoublequoteopen}{\isasymAnd}z{\isachardot}\ z{\isasymin}y\ {\isasymLongrightarrow}\ M{\isacharparenleft}z{\isacharparenright}{\isachardoublequoteclose}\isanewline
\ \ \isakeyword{shows}\ \isanewline
\ \ \ \ {\isachardoublequoteopen}y\ {\isasymsubseteq}\ Pow{\isacharparenleft}x{\isacharparenright}{\isachardoublequoteclose}
\end{isabelle}
\begin{isabelle}
\isacommand{lemma}\isamarkupfalse%
\ {\isacharparenleft}\isakeyword{in}\ M{\isacharunderscore}trivial{\isacharparenright}\ powerset{\isacharunderscore}abs{\isacharcolon}\ \isanewline
\ \ \isakeyword{assumes}\isanewline
\ \ \ \ {\isachardoublequoteopen}M{\isacharparenleft}x{\isacharparenright}{\isachardoublequoteclose}\ {\isachardoublequoteopen}{\isasymAnd}z{\isachardot}\ z{\isasymin}y\ {\isasymLongrightarrow}\ M{\isacharparenleft}z{\isacharparenright}{\isachardoublequoteclose}\isanewline
\ \ \isakeyword{shows}\isanewline
\ \ {\isachardoublequoteopen}powerset{\isacharparenleft}M{\isacharcomma}x{\isacharcomma}y{\isacharparenright}\ {\isasymlongleftrightarrow}\ y\ {\isacharequal}\ {\isacharbraceleft}a{\isasymin}Pow{\isacharparenleft}x{\isacharparenright}\ {\isachardot}\ M{\isacharparenleft}a{\isacharparenright}{\isacharbraceright}{\isachardoublequoteclose}
\end{isabelle}

Of the rest of the theory file \verb|Powerset_Axiom.thy|, a
considerable fraction is taken by the proof of a closure property of
the ctm $M$, that involves renaming of an internalized formula; also,
the handling of the projections \isatt{fst} and \isatt{snd} must be
done internally. 
\begin{isabelle}
\isacommand{lemma}\isamarkupfalse%
\ sats{\isacharunderscore}fst{\isacharunderscore}snd{\isacharunderscore}in{\isacharunderscore}M{\isacharcolon}\isanewline
\ \isakeyword{assumes}\ \isanewline
\  \ {\isachardoublequoteopen}A{\isasymin}M{\isachardoublequoteclose}\ {\isachardoublequoteopen}B{\isasymin}M{\isachardoublequoteclose}\ {\isachardoublequoteopen}{\isasymphi}\ {\isasymin}\ formula{\isachardoublequoteclose}\ {\isachardoublequoteopen}p{\isasymin}M{\isachardoublequoteclose}\ {\isachardoublequoteopen}l{\isasymin}M{\isachardoublequoteclose}\isanewline
\ \  {\isachardoublequoteopen}o{\isasymin}M{\isachardoublequoteclose}\ {\isachardoublequoteopen}{\isasymchi}{\isasymin}M{\isachardoublequoteclose}\ {\isachardoublequoteopen}arity{\isacharparenleft}{\isasymphi}{\isacharparenright}\ {\isasymle}\ {\isadigit{6}}{\isachardoublequoteclose}\isanewline
\  \isakeyword{shows}\isanewline
\  \ {\isachardoublequoteopen}{\isacharbraceleft}sq\ {\isasymin}\ A{\isasymtimes}B\ {\isachardot}\isanewline
\ \ \ \ \ sats{\isacharparenleft}M{\isacharcomma}{\isasymphi}{\isacharcomma}{\isacharbrackleft}p{\isacharcomma}l{\isacharcomma}o{\isacharcomma}snd{\isacharparenleft}sq{\isacharparenright}{\isacharcomma}fst{\isacharparenleft}sq{\isacharparenright}{\isacharcomma}{\isasymchi}{\isacharbrackright}{\isacharparenright}{\isacharbraceright}\ {\isasymin}\ M{\isachardoublequoteclose}
\end{isabelle}

\section{Conclusions and future work}
\label{sec:conclusions-future-work}

The ultimate goal of our project is a complete mechanization of
forcing allowing for further developments (formalization of the
relative consistency of $\CH$), with the long-term hope that 
working set-theorists will adopt these formal tools as an aid to their
research. In the current paper we reported a first major
milestone towards that goal; viz. a formal proof in Isabelle/ZF of the
satisfaction by generic extensions of most of the $\ZF$ axioms.%

We cannot overstate the importance of following the sharp and detailed 
presentation of forcing given by
\citet{kunen2011set}. In fact, it helped us to delineate the
\emph{thematic} aspects of our formalization; i.e.~the handling of all
the theoretical concepts and results in the subject and it informed
the structure of locales organizing our development. This had an
impact, in particular, in the formal statement of the Fundamental
Theorems. We consider that the writing of the \isatt{forcing\_thms}
locale, though only taking a few lines of code, is the second most
important achievement of this work, since there is no obvious
reference from which to translate this directly. The accomplishment
of the formalizations of Separation and Powerset are, in a sense,
certificates that the locale of the Fundamental Theorems was set
correctly.

Two axioms have not been treated in full. Infinity was proved under
two extra assumptions on the model; when we develop a full-fledged
interface between ctms of $\ZF$  and the locales 
providing recursive constructions from Paulson's
\isatt{ZF-Constructible} session, the same current proof will hold
with no extra assumptions. The same goes for the results $M\sbq M[G]$
and $G\in M[G]$. 

The Replacement Axiom, however, requires some more work to be
done. In Kunen it
requires a relativized version (i.e., showing that it holds for $M$)
of the \emph{Reflection Principle}. In order to state this
meta-theoretic result by Montague, recall that an equivalent
formulation of the Foundation Axiom states that the universe of sets
can be decomposed in a transfinite, cumulative hierarchy of sets:
\begin{theorem}
  Let $V_{\al}\defi\union\{\P(V_\be) : \be<\al\}$ for each ordinal
  $\al$. Then each $V_\al$ is a set and 
  $\forall x. \exists\al .\ \Ord(\al) \y x\in V_\al$.  
\end{theorem}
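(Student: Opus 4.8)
The plan is to prove the two conjuncts—set-hood of each $V_\al$ and $\forall x.\exists\al.\ \Ord(\al)\y x\in V_\al$—separately, both by transfinite or $\in$-induction, after first legitimizing the recursion and recording a pair of structural lemmas. First I would justify that the defining equation actually specifies $V_\al$ for every ordinal: the ordinals are well-ordered by $\in$, so the principle of recursion on well-founded relations (the \isa{wfrec} machinery discussed in Section~\ref{sec:generic-extensions}) applies, with the functional sending a stage $\al$ and the already-computed restriction to $\union\{\P(V_\be):\be<\al\}$. With the recursion in place, I would show that each $V_\al$ is a set by transfinite induction on $\al$: assuming each $V_\be$ is a set for $\be<\al$, the Powerset axiom makes each $\P(V_\be)$ a set; Replacement applied to the set $\al$ and the class function $\be\mapsto\P(V_\be)$ yields that $\{\P(V_\be):\be<\al\}$ is a set; and the Union axiom then gives that $V_\al=\union\{\P(V_\be):\be<\al\}$ is a set.

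Next I would prove two structural lemmas simultaneously by transfinite induction on $\al$: (i) each $V_\al$ is transitive, and (ii) monotonicity, $\be\le\al$ implies $V_\be\subseteq V_\al$. For monotonicity, if $\be<\al$ then $\P(V_\be)$ is one of the sets in the union defining $V_\al$, so $\P(V_\be)\subseteq V_\al$; transitivity of $V_\be$ (available by the inductive hypothesis) gives $V_\be\subseteq\P(V_\be)$, whence $V_\be\subseteq V_\al$. Transitivity of $V_\al$ then follows, since any $x\in V_\al$ lies in some $\P(V_\be)$ with $\be<\al$, i.e.\ $x\subseteq V_\be\subseteq V_\al$.

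For the second conjunct I would use $\in$-induction, which is the working form of the Foundation axiom. Assuming every $y\in x$ belongs to some $V_{\al_y}$, Replacement turns the family of (say least) such ordinals into a genuine set $\{\al_y:y\in x\}$ of ordinals; let $\al$ be its supremum. By monotonicity each $y\in V_{\al_y}\subseteq V_\al$, so $x\subseteq V_\al$, i.e.\ $x\in\P(V_\al)\subseteq V_{\al+1}$, and $\al+1$ is an ordinal as required. This closes the induction and yields $\forall x.\exists\al.\ \Ord(\al)\y x\in V_\al$.

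Mathematically the argument is routine; the genuinely delicate point in the informal proof is the second part, where one must invoke Foundation in its $\in$-induction form and use Replacement to convert the family of ranks of the elements of $x$ into an actual set that admits a supremum. In the \emph{mechanized} setting the principal obstacle is different: giving the recursion over the \emph{proper class} of ordinals a legitimate formal footing, and discharging the set-hood side conditions—the applications of Powerset, Replacement, and Union—that are invisible in the informal statement but must each be made explicit. Since this theorem is the stepping stone towards the relativized Reflection Principle needed for Replacement in $M[G]$, I expect that carefully packaging these side conditions so that the hierarchy can later be relativized to the class $\#\#M$ will be the most demanding part.
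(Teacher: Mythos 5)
Your proposal is mathematically correct, and it is the standard textbook argument: legitimize the transfinite recursion, get set-hood of each $V_\al$ from Replacement, Union, and Powerset, establish transitivity and monotonicity by simultaneous induction, and then prove $\forall x. \exists\al .\ \Ord(\al) \y x\in V_\al$ by $\in$-induction, using Replacement to collect the ranks of the elements of $x$ and taking a supremum. There is, however, nothing in the paper to compare it against: the theorem is stated there without proof, as a recalled classical fact (the paper explicitly introduces it as \enquote{an equivalent formulation of the Foundation Axiom}) whose only role is to set up the statement of the Reflection Principle in the discussion of future work on the Replacement Axiom; neither this theorem nor Reflection is formalized in the development described. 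Your closing remarks---that the second conjunct is where Foundation genuinely enters, and that a mechanization would have to give the class-recursion a formal footing and make the Powerset/Replacement/Union side conditions explicit, ideally in a form that relativizes to $\#\#M$---are consistent with the paper's own assessment that the relativized Reflection Principle is the nontrivial outstanding ingredient, and they correctly anticipate why the authors defer it.
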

\begin{theorem}[Reflection Principle]\label{th:reflection-principle}
  For every finite $\Phi\sbq\ZF$, $\ZF$ proves: ``There exist
  unboundedly many $\al$ such that $V_\al\models \Phi$.''
\end{theorem}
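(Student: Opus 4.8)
The plan is to prove the stronger \emph{reflection} statement and read off the displayed one as a corollary: for any finite $\Phi \subseteq \ZF$, closed under subformulas, there are unboundedly many ordinals $\alpha$ such that each $\phi \in \Phi$ is \emph{absolute} between $V_\alpha$ and the full universe, in the sense that $(\phi)^{V_\alpha} \lsii \phi$ holds for every assignment of parameters drawn from $V_\alpha$. Since each $\phi \in \Phi$ is an axiom, it is true in the universe, so absoluteness immediately gives $V_\alpha \models \phi$, which is the asserted conclusion. Closing $\Phi$ under subformulas keeps it finite and is a harmless preliminary that I would perform first.

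The heart of the argument is the treatment of existential witnesses. For each subformula of the shape $\exists y.\ \psi(y,\bar{x})$ occurring in $\Phi$, I would define a class function $F_\psi$ sending a tuple $\bar{x}$ to the least ordinal $\gamma$ such that some $y \in V_\gamma$ satisfies $\psi(y,\bar{x})$, and to $0$ if no witness exists. This is well defined precisely because of the preceding Theorem: every set lies in some $V_\gamma$, so whenever a witness exists it has a rank, and the least such $\gamma$ is recovered by minimization over the ordinals.

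Next I would construct the reflecting ordinal. Starting from an arbitrary $\beta_0$, I form an increasing $\omega$-sequence $\beta_0 < \beta_1 < \cdots$ in which $\beta_{k+1}$ is chosen above $\beta_k$ and large enough that $F_\psi(\bar{x}) < \beta_{k+1}$ for every relevant $\psi$ and every tuple $\bar{x}$ from $V_{\beta_k}$. The needed bound exists by Replacement: the image of the \emph{set} $V_{\beta_k}$ under each $F_\psi$ is a set of ordinals, hence bounded. Setting $\alpha = \sup_k \beta_k$ yields a limit ordinal above $\beta_0$ with $V_\alpha = \bigcup_k V_{\beta_k}$, which by construction is closed under all the witness functions. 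A Tarski--Vaught style argument then shows that whenever $\exists y.\ \psi(y,\bar{x})$ holds in the universe for $\bar{x} \in V_\alpha$, a witness already lies in $V_\alpha$.

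Finally I would establish the absoluteness claim by induction on the structure of the formulas in $\Phi$, using subformula-closure. Atomic formulas and the propositional connectives are immediate from transitivity of $V_\alpha$; the only substantive case is the existential quantifier, where the closure property just established supplies the nontrivial direction. The main obstacle is conceptual rather than computational: reflection is genuinely a \emph{theorem scheme}, and the inductive step ranges over the \emph{syntactic} structure of formulas. This is exactly the tension the paper flags, since induction over first-order formulas is unavailable on type $\tyo$ but \emph{is} available on Paulson's internalized \formula{} datatype; I would therefore expect the mechanization to carry out the whole reflection argument internally, relating satisfaction in the set $V_\alpha$ to satisfaction in the ambient class via an explicit satisfaction predicate, rather than as an external meta-induction.
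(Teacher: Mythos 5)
Your proposal is the standard Montague--L\'evy reflection argument and it is correct: closing $\Phi$ under subformulas, defining least-rank witness functions via Replacement and the cumulative-hierarchy theorem, closing off under those functions along an increasing $\omega$-chain of ordinals, and running the Tarski--Vaught induction at the supremum is exactly how this theorem scheme is proved in, e.g., Kunen. There is, however, nothing in the paper to compare it against: the paper never proves Theorem~\ref{th:reflection-principle}. The result is stated in the concluding section, attributed to Montague, purely as background for explaining why the Replacement Axiom in $M[G]$ remains future work (Kunen's proof of Replacement in generic extensions needs a version of Reflection relativized to the ground model $M$). The only substantive remarks the paper makes about its proof are that one may replace $\Phi$ by the conjunction of its members, and that the schematic nature of the result ``hints at a proof by induction on formulas, and hence it must be shown internally'' --- that is, by induction on Paulson's \formula{} datatype rather than on predicates of type \tyo{} --- together with a pointer to Paulson's existing formalization of the Reflection theorem, whose relativized form may or may not be directly reusable. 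Your final paragraph anticipates exactly this tension, so your proposal is fully consistent with the paper's (unexecuted) plan; just be aware that what you wrote is a metatheoretic proof sketch, whereas the mechanization the paper envisions would have to recast every step of it --- the witness functions, the $\omega$-chain, and especially the syntactic induction --- on internalized formulas.
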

It is obvious that we can take the conjunction of $\Phi$ and state
Theorem~\ref{th:reflection-principle} for a single formula, say $\phi$.
The schematic nature of this result
hints at a proof by induction on formulas, and hence it must be shown
internally. It is to be noted that Paulson
\cite{DBLP:conf/cade/Paulson02} also formalized the
Reflection principle in Isabelle/ZF, but it is not clear if the
relativized version follows directly from it. (It may be possible to
sidestep Reflection, since in Neeman \cite{neeman-course}, only the
relativization of the cumulative hierarchy is needed; nevertheless, it
is a nontrivial task.)

This is an appropriate point to insist that the internal/external
dichotomy has been a powerful agent in the shaping of our project.
This tension was also pondered by Paulson in his formalization of
G\"odel's constructible universe \cite{paulson_2003}; after choosing a
shallow embedding of $\ZF$, every argument proved by induction on
formulas (or functions defined by recursion) should be done using
internalized formulas. Working on top of Paulson's library, we
prototyped the formula-transformer $\forceisa$, which is defined for
internalized formulas, and this affects indirectly the proof of the
Separation Axiom (despite the latter is not by induction). The proof
of Replacement also calls for internalized formulas, because one needs
a general version of the Reflection Principle (since the formula
$\phi$ involved depends on the specific instance of Replacement being
proved). 

An alternative road to internalization would be to redevelop
absoluteness 
results in a more structured metatheory that already includes a
recursively defined type of first order formulas. Needless to say, this
comprises an extensive re-engineering.   

A secondary, more prosaic, outcome of this project is to precisely
assess which assumptions on the ground model $M$ are needed to develop
the forcing machinery. The obvious are transitivity and $M$ being
countable (but keep in mind Lemma~\ref{lem:wf-model-implies-ctm}); the
first because many absoluteness results follows from this, the latter
for the existence of generic filters. A
more anecdotal one is that to show that an instance of Separation
with at most two parameters holds in $M[G]$, one needs to assume a
particular  six-parameter
instance in $M$ (four extra parameters can be directly blamed on
$\forceisa$).  The purpose of identifying those assumptions is to assemble
in a locale the specific (instances of) axioms that should satisfy the
ground model in order to perform forcing constructions; this list will
likely include all the instances of Separation and Replacement that
are needed to satisfy the requirements of the locales in the
\isatt{ZF-Constructible} session.

We have already commented on our hacking of \isatt{ZF-Constructible}
to maximize its modularity and thus the re-usability in other
formalizations. We think it would be desirable to organize it
somewhat differently: a trivial change is to catalog in one file all
the internalized formulas. A more
conceptual modification would be to start out with an even more basic
locale that only assumes $M$ to be a non-empty transitive class, as
many absoluteness results follow from this hypothesis. Furthermore, as
Paulson comments in the 
sources, it would have been better to minimize the use of the Powerset
Axiom in locales and proofs. There are useful natural models that
satisfy a sub-theory of $\ZF$ not including Powerset, and to ensure a
broader applicability, it would be convenient to have absoluteness
results not assuming it. We plan to contribute back to the official
distribution of Isabelle/ZF with a thorough revision of the development of
constructibility.

\
\providecommand{\noopsort}[1]{}
\begin{small}\end{small}

\newpage
\appendix
\section*{A short overview of our development}
In this appendix we succinctly describe the contents of each file. We
include in 
Figure~\ref{fig:deps}  a dependency graph of our formalization. The
theories on a grayish background are directly from Paulson; we
highlight with blue/cyan those of Paulson that we modified. We
have developed from scratch the rest, in white.

\begin{description}[itemsep=1.5pt]
\item[\texttt{Nat\_Miscellanea}]Miscellaneous results for naturals, mostly
  needed for renamings.
\item[\texttt{Renaming}] Renaming of internalized formulas, see
  Section \ref{sec:renaming}.
\item[\texttt{Pointed\_DC}] A pointed version of the Principle of
  Dependent Choices.
\item[\texttt{Recursion\_Thms}] Enhancements about recursively defined
  functions.
\item[\texttt{Forcing\_Notions}] Definition of Posets with maximal
  element, filters, dense sets. Proof of the Rasiowa-Sikorski Lemma.
\item[\texttt{Forcing\_Data}] Definition of the locales:
  \begin{inlinelist}
  \item \texttt{M\_ZF} satisfaction of axioms; and 
  \item \texttt{forcing\_data} extending the previous one with
    \texttt{forcing\_notion}, transitivity, and being countable.
  \end{inlinelist}
\item[\texttt{Interface}] Instantiation of locales \texttt{M\_trivial}
  and \texttt{M\_basic} for every instance of \texttt{Forcing\_Data}.
\item[\texttt{Names}] Definitions of $\chk$, $\val$, and the generic
  extension. Various results about them.
\item[\texttt{Forcing\_Theorems}] Specification of fundamental
  theorems of forcing, see Section~\ref{sec:forcing}.
\item[\texttt{*\_Axiom}] Proof of the satisfaction of the
  corresponding axiom in the generic extension.
\end{description}
\begin{figure*}[!b]
  \begin{center}
    \includegraphics[height=14cm]{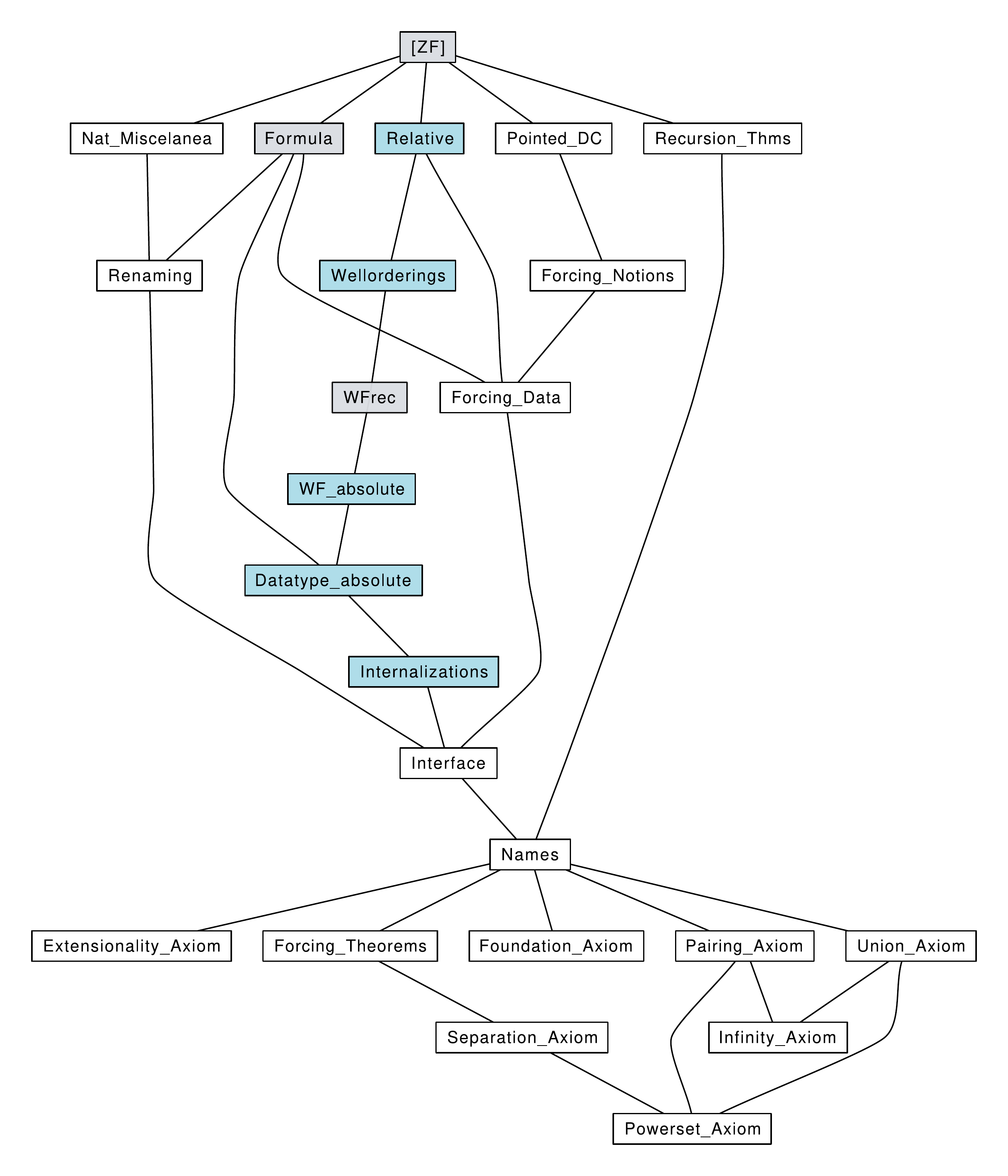}
  \end{center}
  \caption{Dependency graph of the \isatt{Separation} session.}
  \label{fig:deps}
\end{figure*}

\end{document}

